\documentclass[sigplan,10pt,screen]{acmart}
\renewcommand\footnotetextcopyrightpermission[1]{}
\settopmatter{printfolios=true, printacmref=false}

\usepackage{xspace}
\usepackage{tabularx}
\usepackage{subcaption}
\usepackage{enumitem}
\usepackage{algorithm}
\usepackage{algorithmicx}
\usepackage[noend]{algpseudocode}
\usepackage{pifont}
\usepackage{multirow}
\usepackage{booktabs}
\usepackage{graphicx}
\usepackage{etoolbox}
\usepackage[all]{nowidow}
\usepackage{colortbl}
\usepackage{latexsym}
\usepackage{svg}
\usepackage{rotating}
\usepackage{amsmath}
\DeclareMathOperator*{\argmin}{arg\,min}

\setitemize{noitemsep,topsep=0pt,parsep=0pt,partopsep=0pt}

\newcommand{\parabf}[1]{\medskip\noindent\textbf{#1}}

\newcommand{\paraf}[1]{\noindent\textbf{#1}}
\newcommand{\cut}[1]{}

\newcommand{\sysname}{RAGCache\xspace}

\begin{document}
\title{\sysname: Efficient Knowledge Caching for Retrieval-Augmented Generation}
\pagestyle{plain}

\author{
Chao Jin$^1$\qquad Zili Zhang$^1$\qquad Xuanlin Jiang$^1$\qquad Fangyue Liu$^1$\\ 
Xin Liu$^2$ \qquad Xuanzhe Liu$^1$\qquad Xin Jin$^1$\\
$^1$\textit{Peking University} \qquad $^2$\textit{ByteDance Inc.}
}

\begin{abstract}
Retrieval-Augmented Generation (RAG) has shown significant improvements
in various natural language processing tasks by integrating the strengths of
large language models (LLMs) and external knowledge databases.
However, RAG introduces long sequence generation and leads to high computation and
memory costs. 
We propose \sysname, a novel multilevel dynamic caching system tailored for
RAG. Our analysis benchmarks current RAG systems,
pinpointing the performance bottleneck (i.e., long sequence due to knowledge
injection) and optimization opportunities (i.e., caching knowledge's intermediate states).
Based on these insights, we design \sysname, which organizes the intermediate states 
of retrieved knowledge in a knowledge tree and caches them in the GPU and host memory hierarchy. 
\sysname proposes a replacement policy that is aware of LLM inference characteristics and 
RAG retrieval patterns. It also dynamically overlaps the retrieval and inference steps to minimize the 
end-to-end latency. We implement \sysname and evaluate it on
vLLM, a state-of-the-art LLM inference system and Faiss, a state-of-the-art vector database.
The experimental results show that \sysname reduces the time to first token (TTFT) by up to 4$\times$ and 
improves the throughput by up to 2.1$\times$ compared to vLLM integrated with Faiss.

\end{abstract}

\maketitle

\section{Introduction}
\label{sec:introduction}

Recent advancements in large language models (LLMs) like GPT-4~\cite{gpt4}, LLaMA2~\cite{touvron2023llama}, 
and PalM~\cite{palm} have significantly enhanced performance across various natural language processing (NLP) tasks,
including question answering, summarization, and translation~\cite{siriwardhana2023improving, zhang2024benchmarking, zhang2023prompting}. 
Retrieval-augmented generation (RAG)~\cite{lewis2020retrieval, langchain} further enhances LLMs by incorporating 
contextually relevant knowledge from external databases, such as Wikipedia~\cite{wikipedia_embeddings}, to 
improve the generation quality. With informative external knowledge, RAG have achieved comparable or even 
better performance than LLMs fine-tuned for specific downstream tasks~\cite{chen2024benchmarking}.

For an RAG request, the RAG system first retrieves relevant documents
from the knowledge database. The documents
are typically represented as feature vectors in a vector database through embedding models,
and the retrieval step is implemented by vector similarity search. 
Then, RAG injects the retrieved documents (i.e., external knowledge)
into the original request and feeds the augmented request to the LLM for generation.
With the help of the retrieved documents, RAG expands LLMs' knowledge base and
contextual understanding, thereby improving the generation quality~\cite{chen2024benchmarking}.

With knowledge injection, RAG introduces long sequence generation for the augmented
request, which leads to high computation and memory costs. For instance, the initial
request contains 100 tokens, and the retrieved documents may contain 1000 tokens in total.
Consequently, the extra computation and memory costs for the augmented request are $>$10$\times$
higher than the original request.
This escalation in resource requirements poses a substantial challenge
in scaling systems for efficient processing of RAG requests.

Recent work~\cite{sglang, vllm}, focusing on system optimizations of LLM inference, has made significant progress 
in sharing the intermediate states of LLM inference to reduce recomputation costs.
vLLM~\cite{vllm} manages the intermediate states in non-contiguous memory blocks 
to allow fine-grained memory allocation and state sharing for a single request's multiple generation iterations. 
SGLang~\cite{sglang} identifies the reusable intermediate states across different requests for 
LLM applications like multi-turn conversations and tree-of-thought~\cite{yao2024tree}. 
However, these efforts only optimize for LLM inference without considering the characteristics of RAG.
They cache the intermediate states in GPU memory, which has limited capacity considering the long sequences 
in augmented requests, leading to suboptimal performance. 

We conduct a system characterization of RAG, which measures the performance of current RAG systems under 
various datasets and retrieval settings with representative LLMs. 
Our analysis highlights a significant performance limitation rooted in the processing of augmented sequences 
due to document injection. 
In addition, we uncover two potential \emph{opportunities} for system optimizations to mitigate this constraint. 
First, the recurrence of identical documents across multiple requests enables the sharing of LLM inference's 
intermediate states for such documents.
Second, a small fraction of documents accounts for the majority of retrieval requests.
This allows us to cache the intermediate states of these frequently accessed documents to reduce the 
computational burden.

To this end, we propose \sysname, a novel multilevel dynamic caching system tailored for
RAG. \sysname is the first system to cache the intermediate states of
retrieved documents (i.e., external knowledge) and share them across multiple requests. 
The core of \sysname is a knowledge tree that adapts the intermediate states of the retrieved documents to the 
GPU and host memory hierarchy. Documents with more frequent accesses are cached in the fast GPU memory and those with 
fewer accesses are cached in the slow host memory. There are mainly two challenges in designing the caching system for RAG. 

First, RAG systems are sensitive to the referred order of the retrieved documents. 
For instance, there are two documents $D_1$ and $D_2$ and two requests $Q_1$ and $Q_2$.
Let $Q_1$'s and $Q_2$'s relevant documents be $[D_1,D_2]$ and $[D_2,D_1]$, respectively, 
where $[D_1,D_2]$ means $D_1$ is more relevant than $D_2$.
The intermediate states (i.e., the key value tensors) of $[D_1,D_2]$ are different from that of $[D_2,D_1]$ 
because the key-value tensor of the new token is calculated based on the preceding tokens in the
attention mechanism of LLMs~\cite{vaswani2017attention}. 
Unfortunately, we also cannot swap the order of $D_1$ and $D_2$.
As recent efforts have shown, the generation quality of the LLMs will be affected by the referred 
order~\cite{chen2023understanding, liu2024lost}. We use knowledge tree to organize the intermediate states 
of the retrieved documents in the GPU and host memory hierarchy and design a prefix-aware 
Greedy-Dual-Size-Frequency (PGDSF) replacement policy that 
comprehensively considers the document order, size, frequency, and recency to minimize the miss rate. 
We also propose a cache-aware request scheduling approach to further improve the hit rate when the request 
rate is high.

Second, vector retrieval (processed on CPU) and LLM inference (processed on GPU) are two independent steps in RAG.
The two steps are executed sequentially in the current RAG systems, 
leading to idle GPU resources during retrieval and long end-to-end latency.
We propose a dynamic speculative pipelining strategy to dynamically
overlap the computation of the two steps and minimize the end-to-end latency while keeping the system load 
under control.

We implement a \sysname prototype and evaluate it on various datasets and representative LLMs.
The experimental results show that \sysname outperforms the state-of-the-art solution, vLLM~\cite{vllm} 
integrated with Faiss~\cite{pinecone-faiss}, by up to 4$\times$ on time to first token (TTFT) and improves the throughput 
by up to 2.1$\times$. 
Compared to SGLang~\cite{sglang}, which reuse the intermediate states in GPU memory, 
\sysname reduces the TTFT by up to 3.5$\times$ and improves the throughput by up to 1.8$\times$. 

In summary, we make the following contributions.
\begin{itemize}[leftmargin=*]
    \item We conduct a detailed system characterization of RAG, which reveals the performance
    bottleneck and optimization opportunities.

    \item We propose \sysname, to the best of our knowledge, the first RAG system
    that caches the intermediate states of external knowledge and shares them across
    multiple queries to reduce the redundant computation.

    \item We design a prefix-aware GDSF replacement policy that leverages the characteristics of RAG to minimize 
    the miss rate and a dynamic speculative pipelining approach to minimize the end-to-end latency. 

    \item We implement a \sysname prototype. The evaluation shows that \sysname
    outperforms vLLM integrated with Faiss by up to 4$\times$ on TTFT and 2.1$\times$ on throughput. 
    Compared to the state-of-the-art caching system for LLM applications, SGLang, 
    \sysname achieves up to 3.5$\times$ lower TTFT and up to 1.8$\times$ higher throughput.
\end{itemize}
\section{Background}
\label{sec:background}
Retrieval-Augmented Generation (RAG) represents a significant advancement
in the field of natural language processing (NLP) and machine learning,
combining LLMs with the vast information accessible in external knowledge databases. 
Specifically, RAG is employed to enhance the generative models' ability to produce more accurate, relevant, and
contextually rich responses by dynamically retrieving information from a corpus
during the generation process. This hybrid approach combines the strengths of two
major strands: the deep contextual understanding of LLMs
and the precision of knowledge database retrieval.
Recent work~\cite{lewis2020retrieval, jiang2024piperag, borgeaud2022improving, ram2023context, trivedi2022interleaving, langchain} 
has demonstrated that RAG can significantly improve the generation quality
across various benchmarks compared to solely generative models.
The RAG framework has since been applied across various tasks, 
including question answering~\cite{siriwardhana2023improving}, 
content creation~\cite{khattab2022demonstrate}, and even code generation~\cite{lu2022reacc, vaithilingam2022expectation},
showcasing its versatility and promise.

\begin{figure}[t]
    \centering
    \includegraphics[width=1\linewidth]{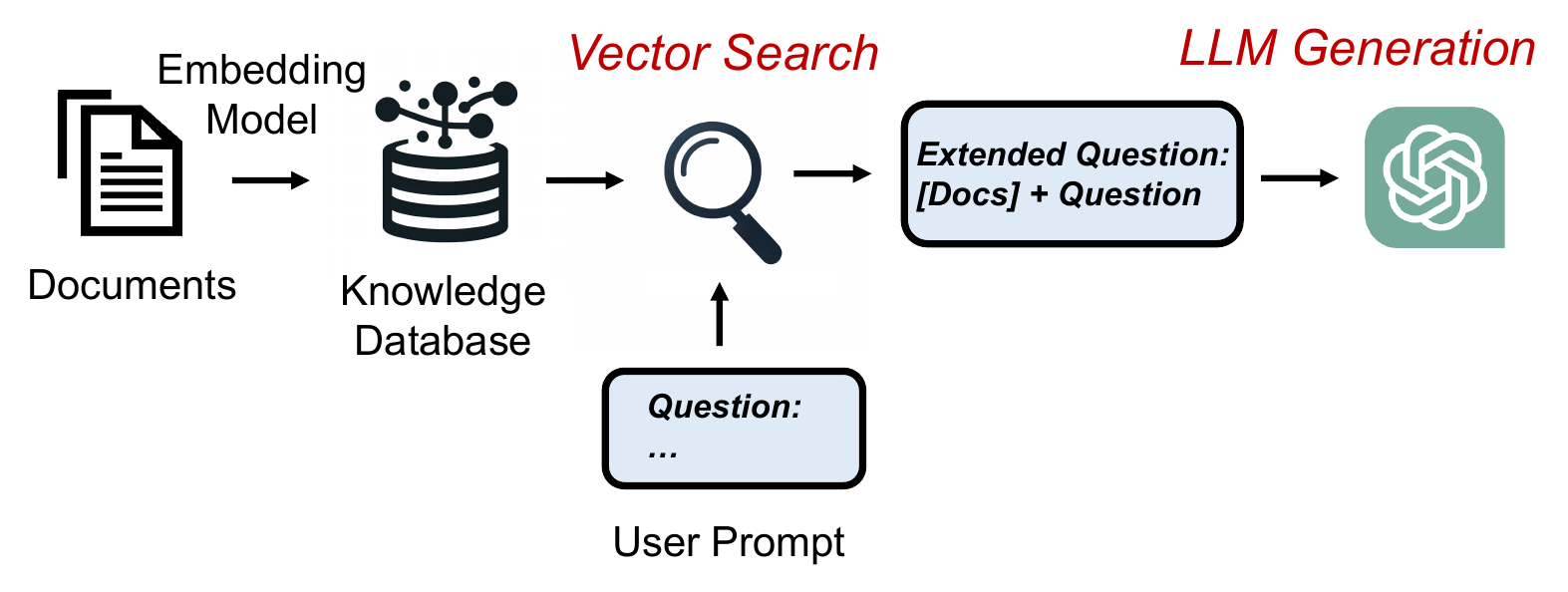}
    \vspace{-0.2in}
    \caption{RAG workflow.}
    \vspace{-0.2in}
    \label{fig:bg:rag}
\end{figure}

As shown in Figure~\ref{fig:bg:rag}, RAG operates on a two-step workflow: \emph{retrieval}
and \emph{generation}, integrating offline preparation with real-time processing for enhanced performance.
Initially, in its offline phase, RAG transforms the external knowledge sources, such as documents,
into high-dimensional vectors using advanced embedding models. RAG then
indexes these vectors into a specialized vector database designed for efficient retrieval.
Upon receiving a user request, RAG first accesses this vector database to conduct a vector similarity search,
retrieving the documents that best match the request based on their semantic content.
Following this, RAG combines the content of these retrieved documents with the original user request,
creating an augmented request. This augmented request is then provided to an LLM,
which leverages the combined information to generate a response that is more informed and contextually rich.

In an RAG workflow, the retrieval step is mainly performed on CPUs, while the generation step is executed on GPUs.
From a system perspective, the end-to-end performance of RAG is affected by both the retrieval or generation steps.
The retrieval time is mainly determined by the vector database's scale,
and the generation time is decided by the model size and the sequence length.
Our subsequent characterization will identify RAG's performance bottleneck and highlight potential areas for optimization.

\section{RAG System Characterization}
\label{sec:characterization}
In this section, we conduct a comprehensive system characterization. First, we
identify the performance bottlenecks for RAG, i.e., the LLM generation step.
Next, we evaluate the performance improvement of caching the intermediate states of the retrieved knowledge.
Finally, we analyze the question patterns and the potential for caching optimization.

\begin{figure}[t]
    \centering
    \includegraphics[width=0.7\linewidth]{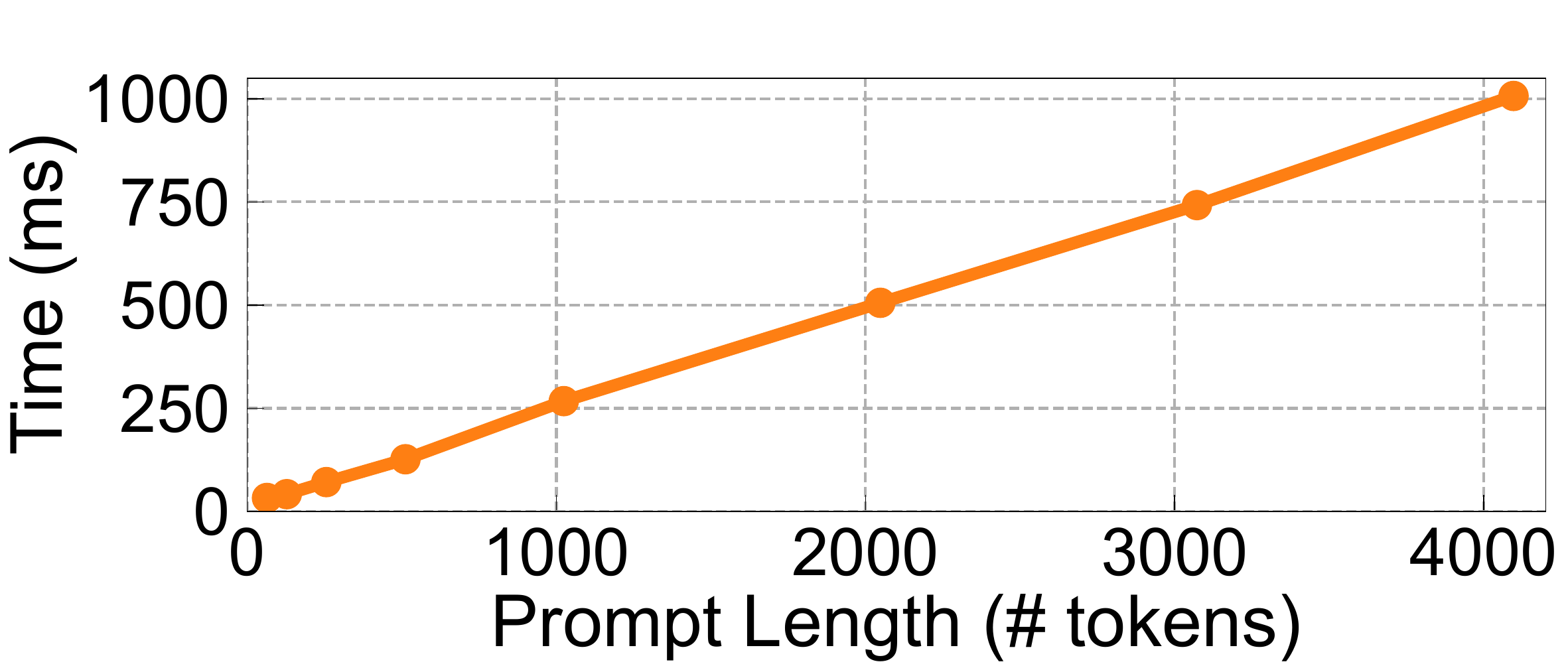}
    \vspace{-0.15in}
    \caption{Inference time with different input lengths.}
    \vspace{-0.15in}
    \label{fig:characterization:inference}
\end{figure}

\begin{figure}[t]
    \centering
    \includegraphics[width=0.7\linewidth]{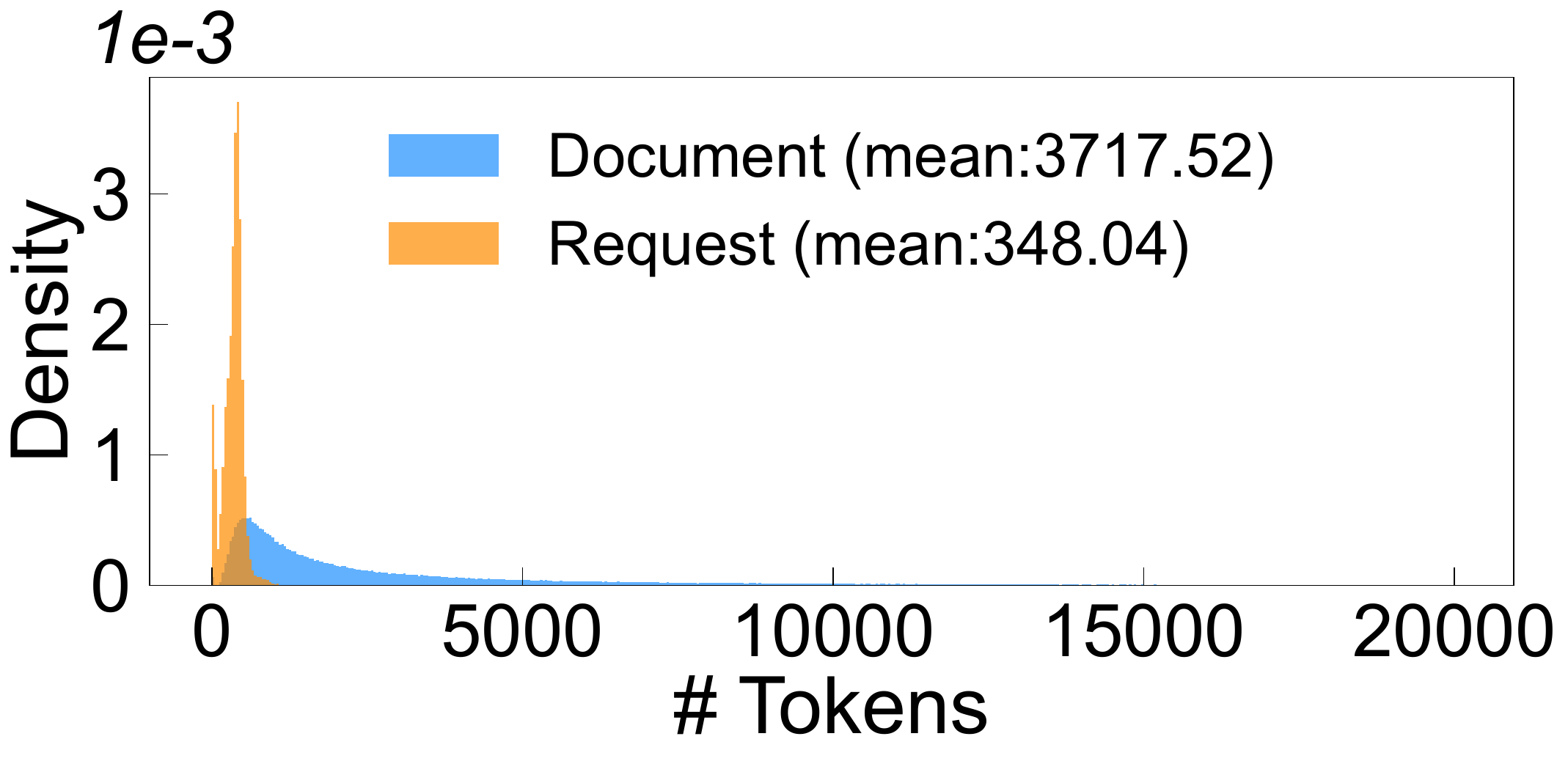}
    \vspace{-0.15in}
    \caption{The distribution of token number.}
    \vspace{-0.15in}
    \label{fig:characterization:distribution}
\end{figure}

\subsection{Performance Bottleneck}
\label{sec:characterization:bottleneck}
LLM inference can be divided into two distinct phases: prefill and decoding.
The prefill phase involves computing the key-value tensors of the input tokens,
while the decoding phase generates the output token in an auto-regressive manner based on the previously generated key-value tensors.
The prefill phase is particularly time-consuming, as it requires to compute the entire input sequence's key-value tensors.

As discussed in \S~\ref{sec:background}, RAG comprises two steps: retrieval and generation.
Recent work~\cite{zhang2023fast, zhang2024fast} shows that 
the retrieval step executes in milliseconds per request with a high accuracy for billion-scale vector databases.
Meanwhile, the generation step, conducted on GPUs, is heavily affected by sequence length and model size.
To identify the performance bottleneck, we evaluate the inference time with fixed output length and different input lengths
on LLaMA2-7B, the smallest model in the LLaMA2 series~\cite{touvron2023llama}.
Larger models will have longer inference time.
The backend system is vLLM~\cite{vllm} equipped with one NVIDIA A10G GPU.
Figure~\ref{fig:characterization:inference} shows that 
the inference time, mainly dominated by the prefill phase,
increases rapidly with sequence length and reaches one second when the sequence length is larger than 4000 tokens.

The sequence length in the LLM generation step is the token number of the original request plus the retrieved document.
We generate a document dataset based on the Wikipedia corpus~\cite{wikipedia_embeddings} with $\sim$0.3 million 
documents from most popular Wikipedia pages. Figure~\ref{fig:characterization:distribution} demonstrates
the distribution of the document length and the request length.
The document length is significantly longer than the request length of the MMLU dataset~\cite{hendrycks2020measuring}.
With an average document length of 3718 tokens, the corresponding inference time is 
markedly higher than the retrieval step in most cases. 
Notably, the retrieval step may take comparable time to the generation step when users require relevant knowledge 
with extremely high accuracy~\cite{zhang2024fast}, which necessitates extensive searching in the vector database 
and further complicates the performance bottleneck.

\subsection{Optimizations Opportunities}
\label{sec:characterization:opp}

\paraf{Caching knowledge.}
The generation step's performance bottleneck primarily arises from processing the
long sequence's key-value tensors in attention blocks. 
A simple yet effective optimization for RAG involves caching these key-value tensors of previously retrieved 
documents. For example, let requests, $Q_1$ and $Q_2$, both refer to the same document, $D_1$.
If $Q_1$ arrives first, the key-value tensors of $D_1$ are computed, and we can cache
the key-value tensors. When $Q_2$ arrives, we can reuse the cached key-value tensors
to reduce the prefill latency of $Q_2$.
The average prefill latency with caching is calculated as follows:
\begin{align}
    \nonumber
    & \mathit{Prefill\ Latency} = \mathit{Miss\ Rate} \times \mathit{Full\ Prefill\ Latency} \\
    \nonumber
    & + \mathit{(1-Miss\ Rate)} \times \mathit{Cache\ Hit\ Latency}
\end{align}
To explore caching's optimization opportunity, we consider three crucial factors:
full prefill computation, cache hit, and miss rate.

\begin{figure}[t]
    \centering
    \includegraphics[width=0.7\linewidth]{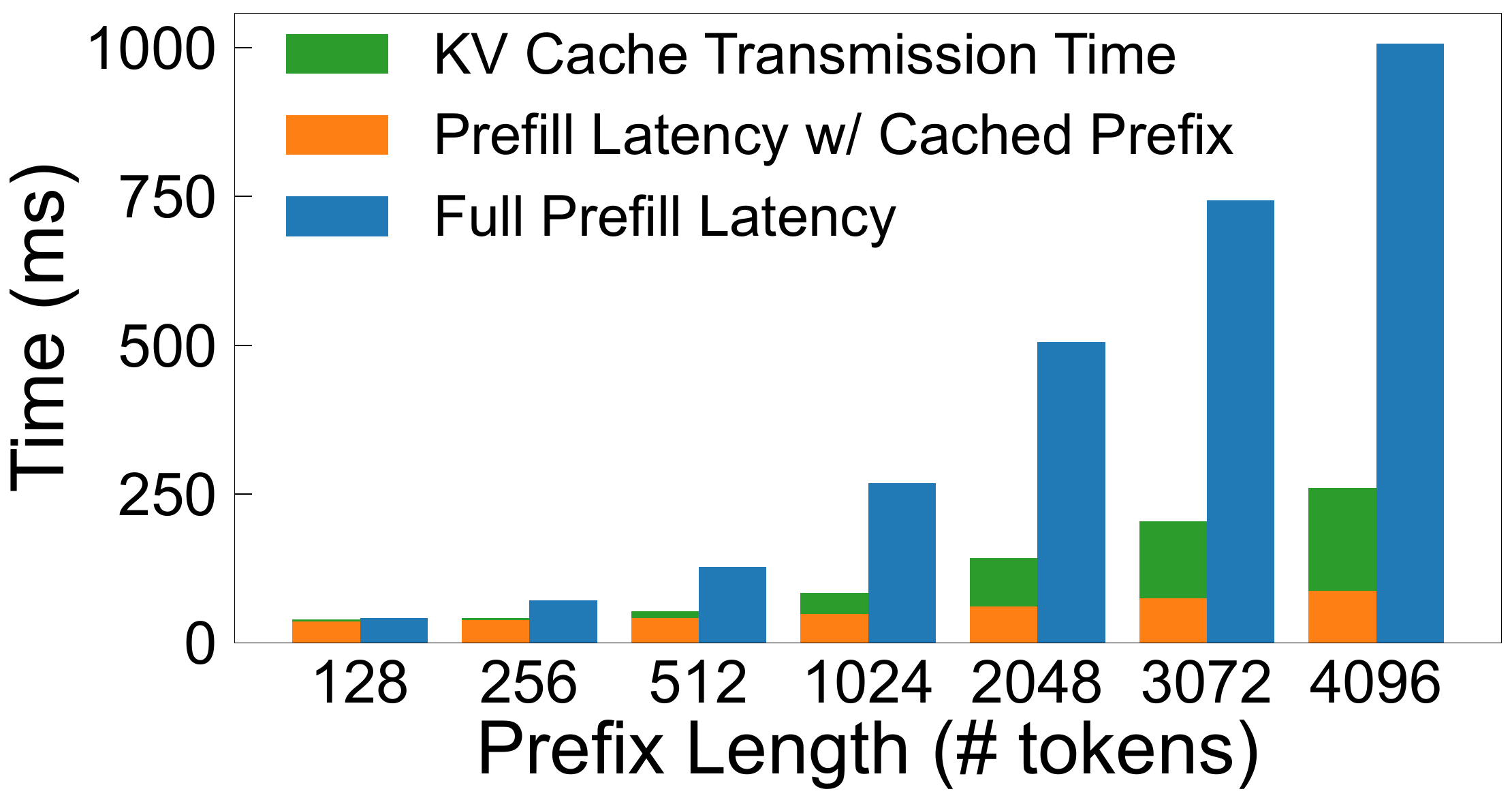}
    \vspace{-0.15in}
    \caption{Prefill latency characterization.}
    \vspace{-0.15in}
    \label{fig:characterization:time_comparison}
\end{figure}

\parabf{Full prefill computation.}
To quantify the full computation, we compare the LLM prefill phase's latency with and without
caching these partial intermediate states (i.e., the key-value cache of prefixes).
We set the original request length to 32 tokens and vary the prefix length from 128 to 4096 tokens. 
Figure~\ref{fig:characterization:time_comparison}
illustrates that the prefill latency is significantly
reduced when caching is employed. 
In the scenario of cached prefix, only the request tokens' key-value tensors are computed.
Conversely, in the scenario of full prefill computation, the key-value tensors of the entire sequence need to 
be calculated.
The full prefill latency is up to 11.5$\times$ lower than that in the cached prefix scenario.
These results underscore the substantial performance
improvement achieved by caching intermediate states of accessed documents.

\begin{figure*}[t]
    \centering
    \includegraphics[width=1\linewidth]{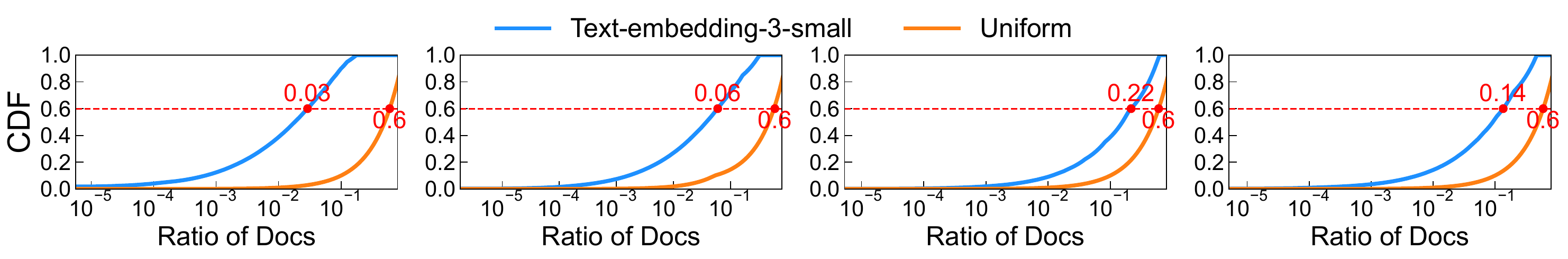}
    \newline
    \hspace*{0.5em}
        \begin{subfigure}{0.24\linewidth}
            \caption{MMLU.}
            \label{fig:characterization:parttern:mmlu}
        \end{subfigure}
        \begin{subfigure}{0.24\linewidth}
            \caption{Google Natural Questions.}
            \label{fig:characterization:parttern:googlenq}
        \end{subfigure}
        \begin{subfigure}{0.24\linewidth}
            \caption{HotpotQA.}
            \label{fig:characterization:parttern:hotpotqa}
        \end{subfigure}
        \begin{subfigure}{0.24\linewidth}
            \caption{TriviaQA.}
            \label{fig:characterization:parttern:triviaqa}
        \end{subfigure}
    \vspace{-0.1in}
    \caption{Retrieval pattern on different datasets.}
    \vspace{-0.15in}
    \label{fig:characterization:parttern}
\end{figure*}

\parabf{Cache hit.}
The cache hit comprises two components: prefill computation of the request tokens and
loading the key-value cache of the retrieved documents.
The former is negligible compared to miss penalty. 
As for the latter one, the limited GPU memory contrasts sharply with the substantial size
of the key-value cache from retrieved documents.
This discrepancy necessitates leveraging the host memory to extend
the caching system, accommodating a greater volume of documents.
However, this introduces a potential overhead: the transmission
of key-value cache between the GPU and host memory.
To assess this, we conduct an evaluation of the transmission
overhead. Figure~\ref{fig:characterization:time_comparison} adds the KV cache transmission time with the 
given prefix length to the prefill time with cached prefix, representing the cache hit latency.
Even with the transmission overhead, the cache hit latency
is significantly better (up to 3.9$\times$ lower) than the full prefill latency, highlighting the advantages of caching
intermediate states of retrieved documents. 

\begin{figure}[t!]
    \centering
    \begin{subfigure}{0.48\linewidth}
        \includegraphics[width=0.95\linewidth]{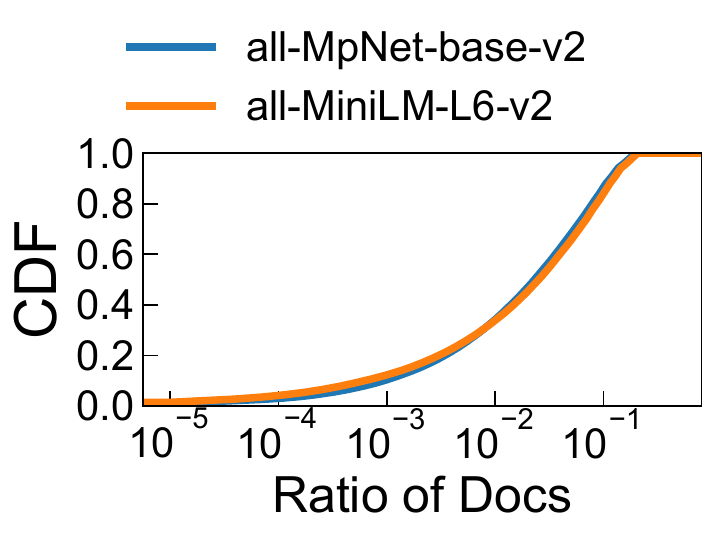}
        \caption{Different embedding models.}
        \label{fig:characterization:query_pattern:embedding}
    \end{subfigure}
    \begin{subfigure}{0.48\linewidth}
        \includegraphics[width=0.95\linewidth]{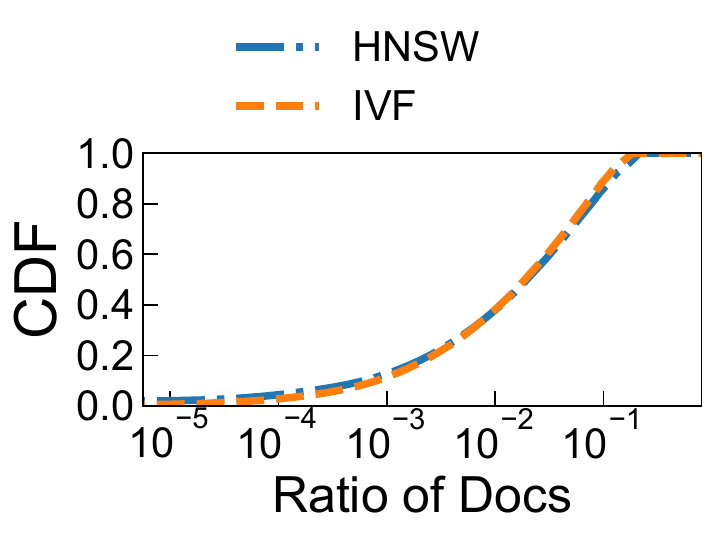}
        \caption{Different ANN indexes.}
        \label{fig:characterization:query_pattern:index}
    \end{subfigure}
    \vspace{-0.1in}
    \caption{Retrieval pattern under different settings.}
    \vspace{-0.15in}
    \label{fig:characterization:query_pattern:settings}
\end{figure}

\parabf{Miss rate.}
The final consideration lies in the retrieval pattern of RAG systems. The cache performance
is dominated by the miss rate, which is directly influenced by the retrieval pattern.
For example, a 100\% cache miss rate occurs when each request retrieves a unique document.
In such a scenario, caching intermediate states of retrieved documents is meaningless.
We analyze the document retrieval pattern in four representative question-answering
datasets for RAG: MMLU~\cite{hendrycks2020measuring}, Google Natural Questions~\cite{kwiatkowski2019natural}, 
HotpotQA~\cite{yang2018hotpotqa}, and TriviaQA~\cite{joshi2017triviaqa}.
We convert the documents on Wikipedia to vectors through \texttt{text-embedding-3-small} model~\cite{text-embedding-3} 
from OpenAI~\cite{openai} for retrieval. 
The number of documents referred by one request is top-$1$. The ANN index is FlatL2, i.e.,
exact search on the entire dataset with Euclidean distance.
Figure~\ref{fig:characterization:parttern} shows the CDF of the accessed documents.
We observe that the retrieval pattern is skewed, with a small fraction of documents
accounting for the majority of retrieval requests. For example, the top 3\% documents
are referred to by 60\% requests in the MMLU dataset,
which is 20$\times$ less than the uniform distribution.
This observation reveals a low miss rate to cache the frequently accessed documents.

Further analysis on additional embedding models and ANN indexes for vector search is shown in 
Figure~\ref{fig:characterization:query_pattern:settings}.
All of the results exhibit a similar retrieval pattern no matter which embedding model or ANN index is used.
The results are consistent with FlatL2 index, which indicates the potential for caching optimization under different settings.

\section{\sysname Overview}
We present \sysname, a novel multilevel dynamic caching system tailored for
RAG. \sysname caches the key-value tensors of retrieved documents across multiple
requests to minimize redundant computation. 
The core of \sysname is a knowledge tree with a prefix-aware 
Greedy Dual-Size Frequency (PGDSF) replacement policy that ensures caching the most critical key-value tensors.
\sysname also implements a global RAG controller that orchestrates interactions between the external knowledge database and LLM inference engine.
The controller is enhanced by system optimizations including cache-aware reordering and dynamic speculative pipelining.

\parabf{Architecture overview.}
We provide a brief overview of \sysname in Figure~\ref{fig:overview:arch}.
When a request arrives, the RAG controller first retrieves the relevant documents from the external knowledge database.
These documents are then forwarded to the cache retriever to locate matching key-value tensors.
If the key-value tensors are absent from the cache, \sysname directs the LLM inference engine to produce new tokens.
Conversely, if the tensors are available, the request with the key-value tensors are forwarded
to the LLM inference engine, which then employs a prefix caching kernel for token generation.
After generating the first token, the key-value tensors are relayed back to the RAG controller, 
which caches the tensors from the accessed documents and refreshes the cache's status.
Finally, the generated answer is delivered to the user as the response.

\parabf{Cache retriever.}
The cache retriever efficiently locates the key-value tensors for documents
stored in the in-memory cache, utilizing a knowledge tree to organize these tensors.
This tree, structured as a prefix tree based on document IDs, aligns with the LLM's 
position sensitivity to the document order. Each path within this tree represents one
specific sequence of documents referenced by a request, with each node holding the key-value tensor
of a referred document.
Different paths may share the same nodes, which indicates the shared documents across different requests.
This structure enables the retriever to swiftly access the key-value tensors of documents in their specified order.

\parabf{RAG controller.}
The RAG controller orchestrates the interactions with some system optimizations tailored for RAG.
Prefix-aware Greedy Dual-Size Frequency (PGDSF) policy is employed to minimize the cache miss rate.
PGDSF calculates a priority based on the frequency, size of key-value tensors, last access time, and 
prefix-aware recomputation cost. 
The cache eviction is determined by the priority, which ensures the most valuable tensors are retained.
Cache-aware reordering schedules the requests to improve cache hit rate and prevent thrashing,
while also ensuring request fairness to mitigate starvation issues.
Dynamic speculative pipelining is designed to overlap the knowledge retrieval and LLM inference to minimize 
the latency. This optimization leverages the mid-process generation of retrieval results to initiate LLM inference 
early. 

\begin{figure}[t]
    \centering
    \includegraphics[width=0.7\linewidth]{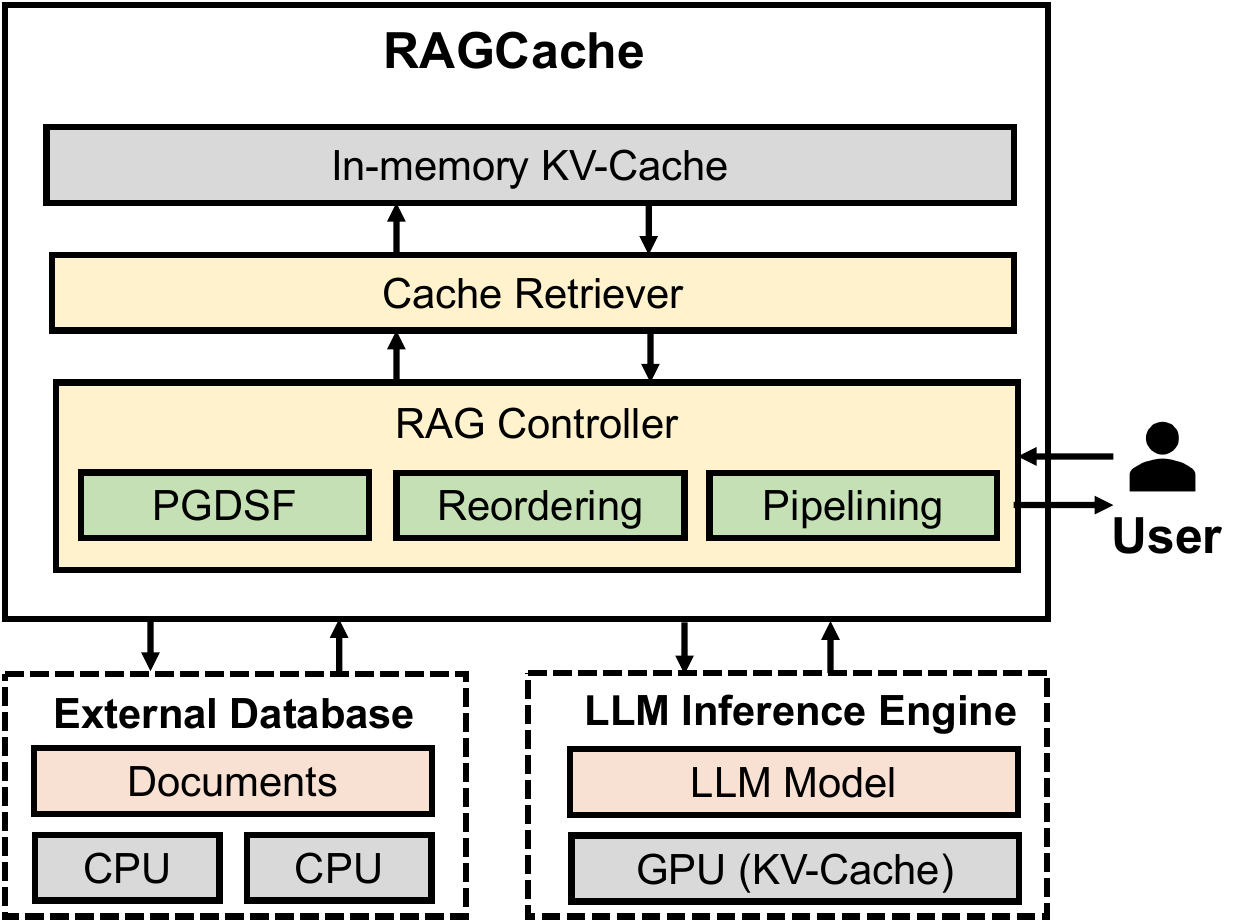}
    \vspace{-0.1in}
    \caption{\sysname overview.}
    \vspace{-0.15in}
    \label{fig:overview:arch}
\end{figure}
\section{\sysname Design}
\label{sec:design}
In this section, we present the design of \sysname. We first introduce the cache structure and the
prefix-aware replacement policy (\S\ref{sec:design:cache}). Then, we describe the 
cache-aware reordering strategy to improve the cache hit rate (\S\ref{sec:design:reorder}).
Finally, we present the dynamic speculative pipelining approach to overlap
knowledge retrieval and LLM inference (\S\ref{sec:design:speculative}).

\subsection{Cache Structure and Replacement Policy}
\label{sec:design:cache}
Different from traditional cache systems that cache individual objects, \sysname caches
the key-value tensors of the retrieved documents that are sensitive to the referred order.
For example, consider two document sequences: $[D_1, D_3]$ with key-value tensors $KV$ and $[D_2, D_3]$ with $KV'$.
Although $KV[1]$ and $KV'[1]$ both pertain to $D_3$,
they are different in values. This discrepancy arises because the key-value tensor
for a given token is generated based on the preceding tokens,
underscoring the order-dependence of key-value tensors.

\begin{figure}[t]
    \centering
    \includegraphics[width=0.8\linewidth]{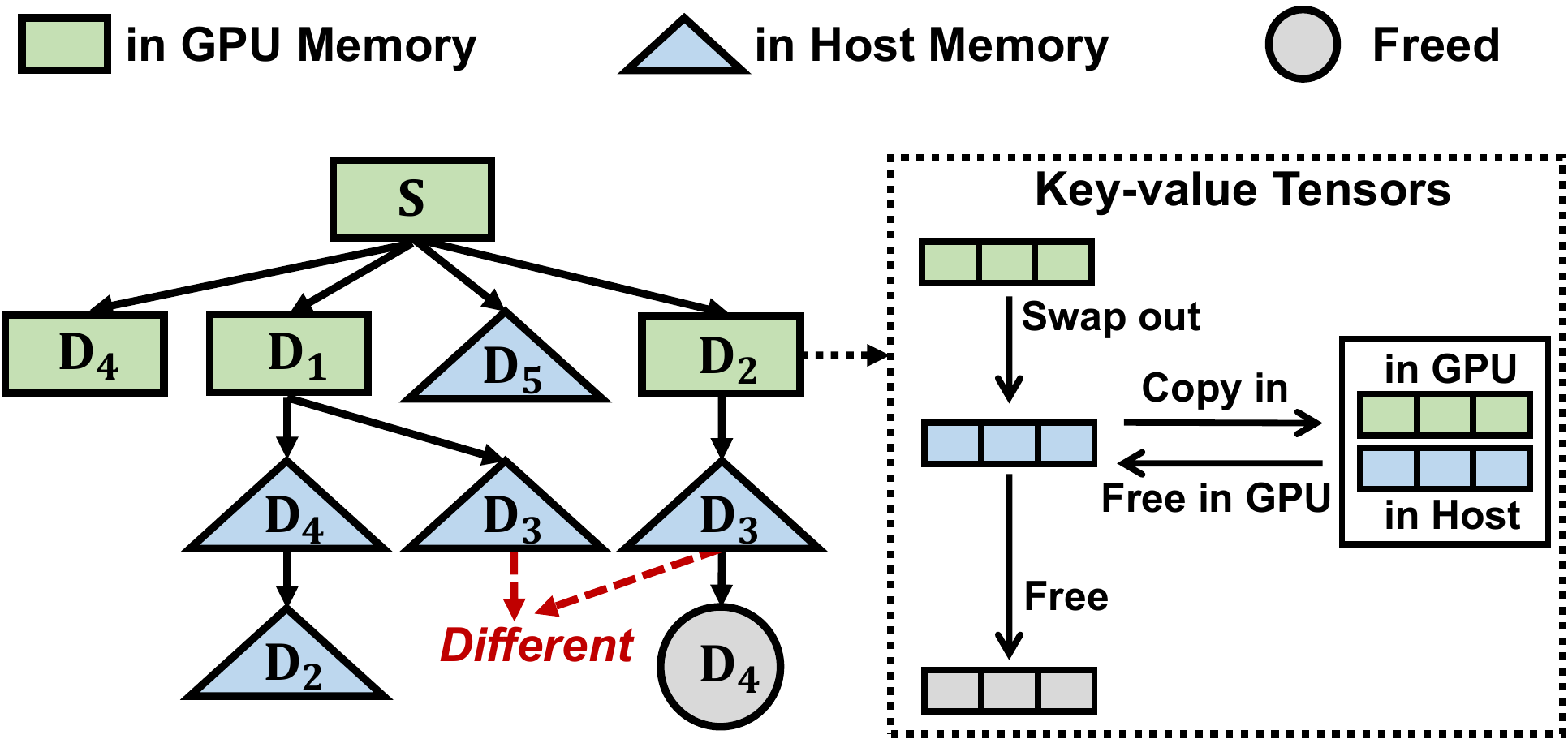}
    \vspace{-0.1in}
    \caption{Knowledge tree.}
    \vspace{-0.2in}
    \label{fig:design:tree}
\end{figure}

To facilitate fast retrieval while maintaining the document order, \sysname structures the documents' key-value tensors
with a knowledge tree, as depicted in Figure~\ref{fig:design:tree}. 
This tree assigns each document to a node, which refers to the memory addresses of the
document's key-value tensors. Following vLLM~\cite{vllm}, \sysname stores 
the key-value tensors in non-continuous memory blocks for KV cache reuse.
The root node $S$ denotes the shared system prompt.
A path from the root to a particular node represents a sequence of documents.

This design inherently allows \sysname to serve multiple requests simultaneously through overlapping paths in the 
tree. \sysname retrieves tensors by prefix matching along these paths. During the prefix matching process, 
if a subsequent document is not located among the child nodes, the traversal is promptly terminated, 
and the identified document sequence is returned. 
This method ensures efficiency with a time complexity of $O(h)$, where $h$ represents the tree's height. 

\parabf{Prefix-aware Greedy-Dual-Size-Frequency (PGDSF) replacement policy.} 
With the knowledge tree, \sysname has to decide each node's placement within a hierarchical cache.
Nodes that are accessed more frequently are ideally stored in GPU memory for faster access speeds,
while those accessed less often are allocated to the slower host memory or simply freed.
To optimize node placement, \sysname employs a prefix-aware Greedy-Dual-Size-Frequency (PGDSF) replacement policy,
which is based on the classic GDSF policy~\cite{cherkasova1998improving}.
Unlike traditional caching strategies such as LRU, which neglect the variable sizes of documents,
PGDSF evaluates each node based on its access frequency, size, and access cost.
This method utilizes limited storage capacity by maintaining the most beneficial nodes, whose \emph{priority} is 
defined as follows:
\begin{align}
    \label{formula:GDSF1}
    & \mathit{Priority} = \mathit{Clock} + \frac{\mathit{Frequency} \times \mathit{Cost}}{\mathit{Size}}
\end{align}
Nodes with lower priority are evicted first.
$\mathit{Clock}$ tracks node access recency.
We maintain two separate logical clocks in the RAG controller for GPU and host memory, respectively, 
to adapt to the cache hierarchy. Each clock starts at zero and updates with every eviction. When 
a document is retrieved, its node's clock is set and its priority is adjusted.
Nodes with older clock, indicating less recent use, receive lower priorities.
Let $E$ be the set of evicted nodes in one eviction operation. The clock is updated accordingly:
\begin{align}
    \label{formula:GDSF2}
    & \mathit{Clock} = \max_{n \in E} \mathit{Priority}(n)
\end{align}
$\mathit{Frequency}$ represents the total retrieval count for a document within a time window. 
This count is reset to zero upon system start or cache clearance. The priority is proportional to the frequency, 
and thus more frequently accessed documents have higher priorities.
$\mathit{Size}$ reflects the number of tokens in a document post-tokenization, directly influencing the memory 
required for its key-value tensors.
$\mathit{Cost}$, defined as the time taken to compute a document's key-value tensors, varies with GPU 
computational capacity, document size, and the sequence of preceding documents.

\begin{figure}[t]
    \centering
    \includegraphics[width=0.7\linewidth]{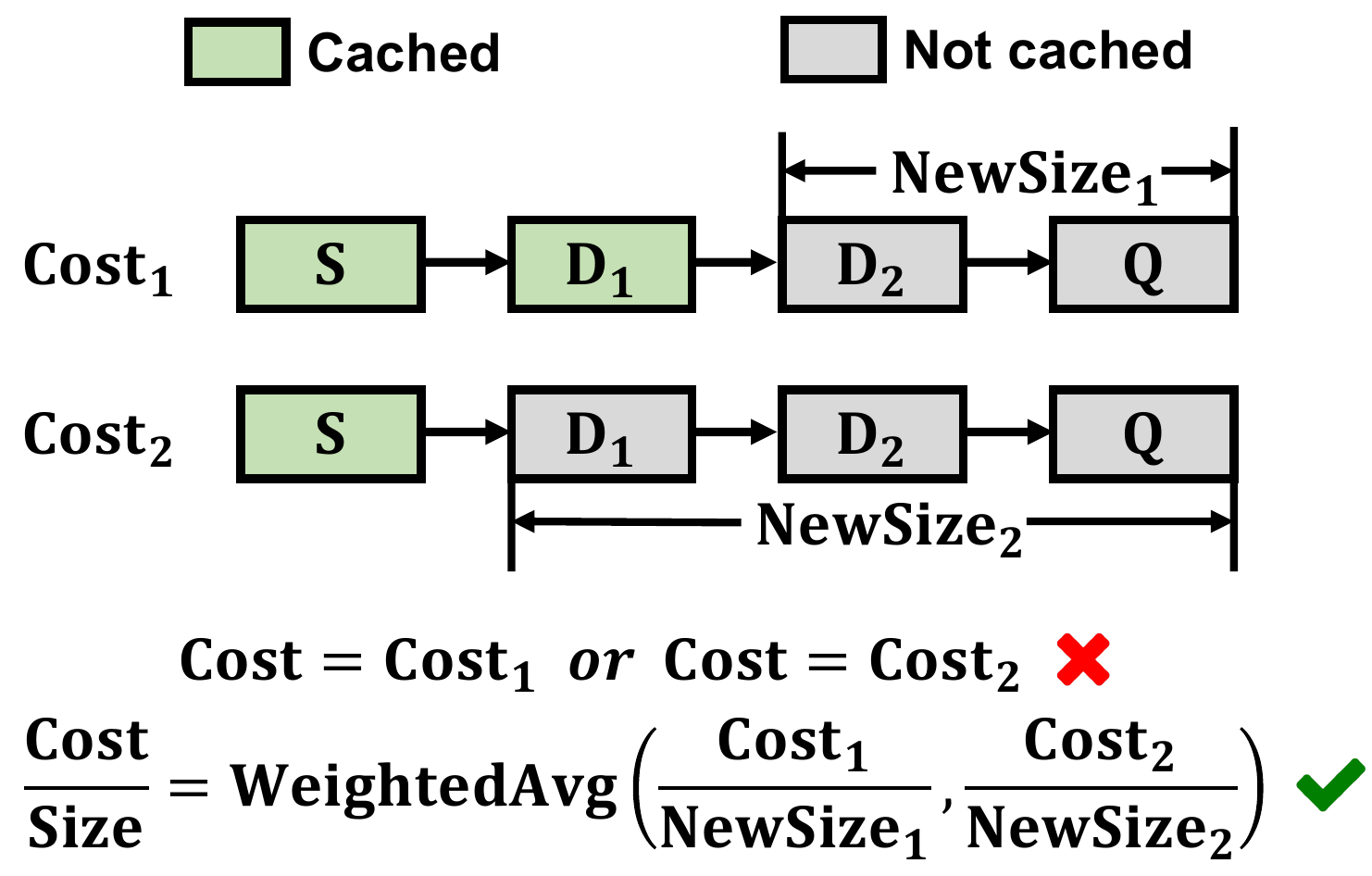}
    \vspace{-0.1in}
    \caption{Cost estimation in PGDSF.}
    \vspace{-0.2in}
    \label{fig:design:cost}
\end{figure}

PGDSF achieves prefix awareness for RAG systems in two aspects: $Cost$ estimation and node placement. 
Unlike GDSF, where costs are straightforward (e.g., object size in web caching), 
RAG costs involve complex LLM generation dynamics. For example, Figure~\ref{fig:design:cost} shows 
varying costs incurred by the same request denoted as 
$[S, D_1, D_2, Q]$. To estimate the cost for $D_2$, directly using the the cost where 
$[S, D_1]$ is cached or only $S$ is cached is imprecise. Besides, the latter case's cost also includes 
the time to compute the key-value tensors for $D_1$ and $Q$.
PGDSF addresses this problem by replacing $Cost/Size$ in Formula~\ref{formula:GDSF1} as follows:
\begin{align}
    \label{formula:Cost}
    & \frac{\mathit{Cost}}{\mathit{Size}} = \frac{1}{m}\sum_{i=1}^{m} \frac{\mathit{Cost}_i}{\mathit{NewSize_i}}
\end{align}
where $m$ is the number of requests that access the document but do not have the document cached.
$\mathit{Cost}_i / \mathit{NewSize_i}$ represents the compute time per non-cached token for the $i$-th request.
Such estimation inherently considers the document size by amortizing the cost to all non-cached tokens.
As for $\mathit{Cost_i}$, \sysname profiles the LLM prefill time with varying cached and non-cached token lengths 
offline and uses bilinear interpolation to estimate the cost for a given request. 
Document retrieval triggers an update in node frequency, cost estimation and clock within the knowledge tree, 
or initiates a new node for documents not previously cached. 

PGDSF orchestrates node placement in the knowledge tree, which is divided into GPU, host, and free segments, 
as illustrated in Figure~\ref{fig:design:tree}.
Nodes in GPU memory serve as parent nodes to those in host memory, 
establishing a hierarchical structure.
\sysname dynamically manages node eviction across these segments for efficiency. 
Specifically, when the GPU memory is full, \sysname swaps
the least priority node in leaf nodes to the host memory. 
\sysname applies a similar process for host memory oversubscription.
This eviction strategy upholds the tree's hierarchical partitioning, which is pivotal for aligning with 
the memory hierarchy and prefix sensitivity in LLM generation.
A node relies on its parent node for key-value tensor calculation, emphasizing the need for prioritizing parent 
node placement for rapid retrieval.

\begin{algorithm}[t]
    \caption{Knowledge Tree Operations}
    \label{alg:design:tree}
    \begin{footnotesize}
    \begin{algorithmic}[1]
        \Function{UPDATE\_NODE\_IN\_GPU}{$node$, $is\_cached$, $\alpha$, $\beta$}
            \State \textbf{// $\alpha$ and $\beta$ are cached and non-cached sizes of the request}
            \State $node.Frequency \gets node.Frequency + 1$ 
            \If{$is\_cached$ is false}
                \State \textbf{// Bilinear interpolation to estimate the cost}
                \State Find $\alpha_l < \alpha < \alpha_h$ and $\beta_l < \beta < \beta_h$ from the profiler
                \State $T_l \gets T(\alpha_l, \beta_l) + \frac{\alpha-\alpha_l}{\alpha_h-\alpha_l} \cdot [T(\alpha_h, \beta_l) - T(\alpha_l, \beta_l)]$
                \State $T_h \gets T(\alpha_l, \beta_h) + \frac{\alpha-\alpha_l}{\alpha_h-\alpha_l} \cdot [T(\alpha_h, \beta_h) - T(\alpha_l, \beta_h)]$
                \State $T(\alpha, \beta) \gets T_l + \frac{\beta - \beta_l}{\beta_h - \beta_l} \cdot (T_h - T_l)$
                \State $node.TotalCost \gets node.TotalCost + \frac{T(\alpha, \beta)}{\beta}$
                \State $node.numComputed \gets node.numComputed + 1$
                \State $node.AvgCost \gets \frac{node.TotalCost}{node.numComputed}$
            \EndIf
            \State $node.Priority \gets Clock + node.AvgCost \times node.Frequency$
        \EndFunction
        \\
        \Function{EVICT\_IN\_GPU}{$required\_size$}
            \State $E \gets \emptyset$  \textbf{// Evicted nodes in GPU}
            \State $S \gets \{n \in GPU \wedge n.Children \notin GPU\}$  \textbf{// Leaf nodes in GPU}
            \While{$\sum_{n \in E} n.Size < required\_size$}
                \State $n \gets \argmin_{n \in S} n.Priority$
                \State $E \gets E \cup \{n\}$
                \State $Clock \gets \max\{Clock, n.Priority\}$
                \If {$n.Parent.Children \notin GPU$}
                    \State $S \gets S \cup \{n.Parent\}$
                \EndIf
            \EndWhile
        \EndFunction
    \end{algorithmic}
    \end{footnotesize}
\end{algorithm}

Algorithm~\ref{alg:design:tree} outlines the operations for updating and evicting nodes in the GPU memory 
of the knowledge tree. $T(\alpha, \beta)$ represents the estimated compute time for a request with 
$\alpha$ cached tokens and $\beta$ non-cached tokens. Upon a document retrieval by a request, 
\sysname updates the cost using bilinear interpolation (line 6--9) if the document is not cached,
\textsc{EVICT\_IN\_GPU} evicts nodes from the GPU memory to accommodate new requests and 
updates the clock according to Formula~\ref{formula:GDSF2}. If a parent node becomes a leaf following eviction, 
it is added to the candidate set $S$.

\parabf{Swap out only once.} The GPU connects to the host memory via the PCIe bus, which typically offers
much lower bandwidth than the GPU HBM. To minimize data transfer between the GPU and host memory,
\sysname adopts a swap-out-only-once strategy as shown in Figure~\ref{fig:design:tree}. 
The key-value tensors of a node are swapped out to the host memory only for the first eviction. 
The host memory keeps the key-value tensors until the node is evicted from the whole cache.
For subsequent evictions in GPU memory, \sysname directly frees the node with zero data copy. 
Given that the capacity of the host memory is one or two orders of magnitude larger than the GPU memory, 
keeping one copy of the key-value tensors in the host memory is acceptable.

\begin{figure}[t]
    \centering
    \includegraphics[width=\linewidth]{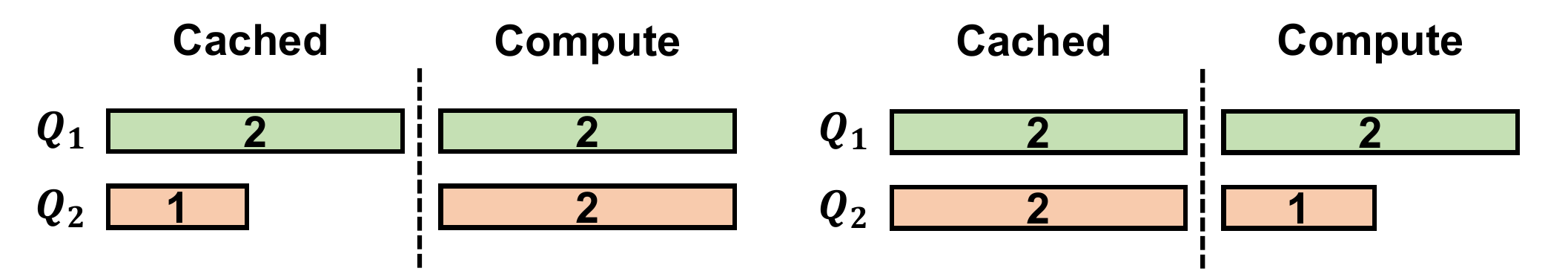}
    \newline
    \hspace*{0.5em}
        \begin{subfigure}{0.47\linewidth}
            \caption{Differ in cached length.}
            \label{fig:design:reorder:cache}
        \end{subfigure}
        \begin{subfigure}{0.47\linewidth}
            \caption{Differ in compute length.}
            \label{fig:design:reorder:compute}
        \end{subfigure}
    \vspace{-0.1in}
    \caption{Cache-aware reordering.}
    \vspace{-0.2in}
    \label{fig:design:reorder}
\end{figure}

\subsection{Cache-aware Reordering}
\label{sec:design:reorder}

Cache hit rate is vital for \sysname's cache efficiency,
yet the unpredictable arrival pattern of user requests results in substantial cache trashing.
The requests referring to the same documents may not be issued together, affecting 
cache efficiency. For illustration, let requests $\{Q_i, i\%2==0\}$ and $\{Q_i, i\%2==1\}$ target 
documents $D_1$ and $D_2$, respectively. The cache capacity is one document.
The sequence $\{Q_1, Q_2, Q_3 ...\}$ causes frequent swapping of the key-value cache of $D_1$ and $D_2$, 
rendering a zero cache hit rate. Conversely, rearranging requests to $\{Q_1, Q_3, Q_5, Q_2, Q_4, Q_6, Q_7, \ldots\}$ optimizes cache utilization, 
which improves the hit rate to 66\%. This exemplifies how strategic request ordering can mitigate cache volatility and enhance cache efficiency.

Before introducing the cache-aware reordering algorithm, we first consider two scenarios to illustrate the key insights.
We assume that the recomputation cost is proportional to the recomputation length in this example.
The first scenario (Figure~\ref{fig:design:reorder:cache}) considers requests with identical recomputation 
demands but varying cached context lengths, under a cache limit of four. With an initial order of $\{Q_1, Q_2\}$, the 
system must clear $Q_2$'s cache space to accommodate $Q_1$'s computation, then reallocate memory for $Q_1$'s 
processing. It effectively utilizes $Q_1$'s cache while discarding $Q_2$'s. This results in a total computation cost of $2 + 1 + 2 = 5$. Conversely, ordering 
as ${Q_2, Q_1}$ utilizes $Q_2$'s cache but discards $Q_1$'s, which increases computation to $2 + 2 + 2 = 6$. 
Thus, cache-aware reordering advocates prioritizing requests with larger cached contexts to enhance cache efficiency,
as they bring larger benefits.

In the second scenario, we examine requests with identical cached context lengths but varying recomputation demands,
given a cache capacity of five. For a sequence $\{Q_1, Q_2\}$, the system must clear $Q_2$'s cache to allocate 
space for $Q_1$'s computation, given only one available memory slot. This necessitates recomputing $Q_2$ entirely,
resulting in a computation cost of $2 + 2 + 1 = 5$. In contrast, the sequence $\{Q_2, Q_1\}$ allows for direct computation
of $Q_2$ due to adequate cache availability. It reduces the total computation to $2 + 1 = 3$. Hence, cache-aware 
reordering is beneficial when it prioritizes requests with shorter recomputation segments, 
as this approach minimizes the adverse effects on cache efficiency.

Drawing from these insights, we introduce a cache-aware reordering algorithm aimed at improving cache efficiency.
\sysname employs a priority queue for managing incoming requests, prioritizing them based on their impact on 
cache performance. Specifically, requests are selected for processing based on a priority metric, defined as:
\begin{align}
\label{formula:priority}
\nonumber
& \mathit{OrderPriority} = \frac{\mathit{Cached\ Length}}{\mathit{Computation\ Length}}
\end{align}
This formula prioritizes requests that are likely to enhance cache efficiency—--those with a larger cached portion 
relative to their computation needs.
By adopting this cache-aware reordering, \sysname increases the cache hit rate and decreases the total computation time,
optimizing resource use and system performance.
To avoid starvation, \sysname sets a window for each request to ensure that all requests are processed no later than the window size. 

\subsection{Dynamic Speculative Pipelining}
\label{sec:design:speculative}

As we discuss in \S\ref{sec:characterization:bottleneck}, the LLM generation is the
key performance bottleneck in RAG systems. However, if the vector database grows to a
larger scale or the retrieving requires a higher accuracy, the retrieval step may
incur a substantial latency. To mitigate the impact of retrieval latency, \sysname
employs dynamic speculative pipelining to overlap knowledge retrieval and LLM inference.
The key insight behind this technique is that the vector search may produce
the final results early in the retrieval step, which can be leveraged
by LLM for speculative generation ahead of time.

\begin{figure}[t]
    \centering
    \includegraphics[width=0.9\linewidth]{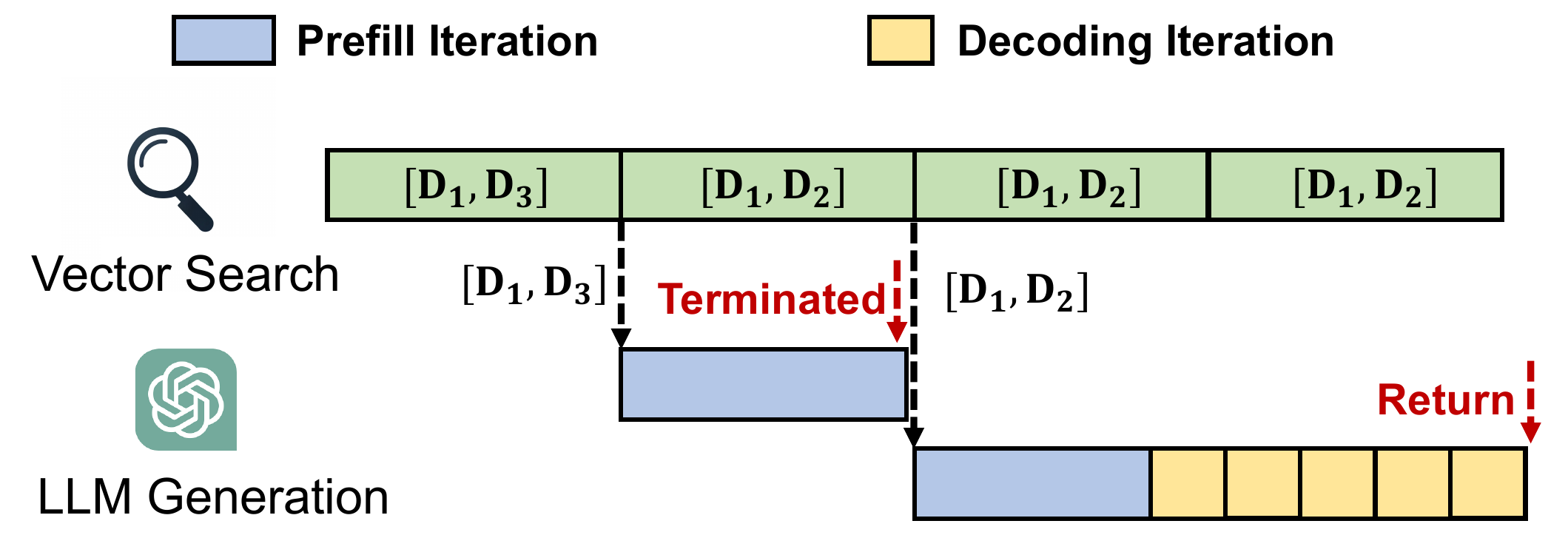}
    \vspace{-0.1in}
    \caption{Speculative pipelining.}
    \vspace{-0.2in}
    \label{fig:design:pipeline}
\end{figure}

Specifically, vector search maintains a queue of top-$k$ candidate documents,
which are ranked by their similarity to the request.
During the retrieval process, the top-$k$ documents in the queue are continuously updated
i.e., some documents with greater similarity are inserted into the queue.
However, the final top-$k$ documents may emerge early in the retrieval step~\cite{zhang2023fast, li2020improving}.
Based this observation, \sysname
introduces a speculative pipelining strategy that splits a request's retrieval process into several
stages. In each stage, \sysname ticks the vector database to send the candidate documents to the LLM
engine for speculative generation. Then, the LLM engine starts a new speculative generation
and terminates the previous generation if the received documents are different from the previous ones.
If there is no difference, the LLM engine remains processing the previous generation.
When the final top-$k$ documents are produced, \sysname sends the final results
to the LLM engine. At this moment, the LLM engine returns the results of the latest speculative generation 
to users if it matches the final top-$k$ documents. Otherwise, the LLM engine performs re-generation.

As shown in Figure~\ref{fig:design:pipeline}, we split the retrieval process into four stages.
The top-$2$ documents in candidate queue are $[D_1, D_3]$, $[D_1, D_2]$, $[D_1, D_2]$, and $[D_1, D_2]$
in the four stages.
After stage one is finished, \sysname sends $[D_1, D_3]$ to the LLM engine for speculative generation.
When stage two is finished, \sysname sends $[D_1, D_2]$ to the LLM engine.
The LLM engine finds that $[D_1, D_3]$ are different from $[D_1, D_2]$, and thus terminates the
previous speculative generation and starts a new one.
As for stage three, the LLM engine receives the same documents as stage two, and
thus remains processing the previous generation.
After the final stage, \sysname sends the final top-$2$ documents to the LLM engine
which is the same as the latest speculative generation. The LLM engine directly returns the
speculative generation results to users.

\begin{figure}[t]
    \centering
    \includegraphics[width=0.8\linewidth]{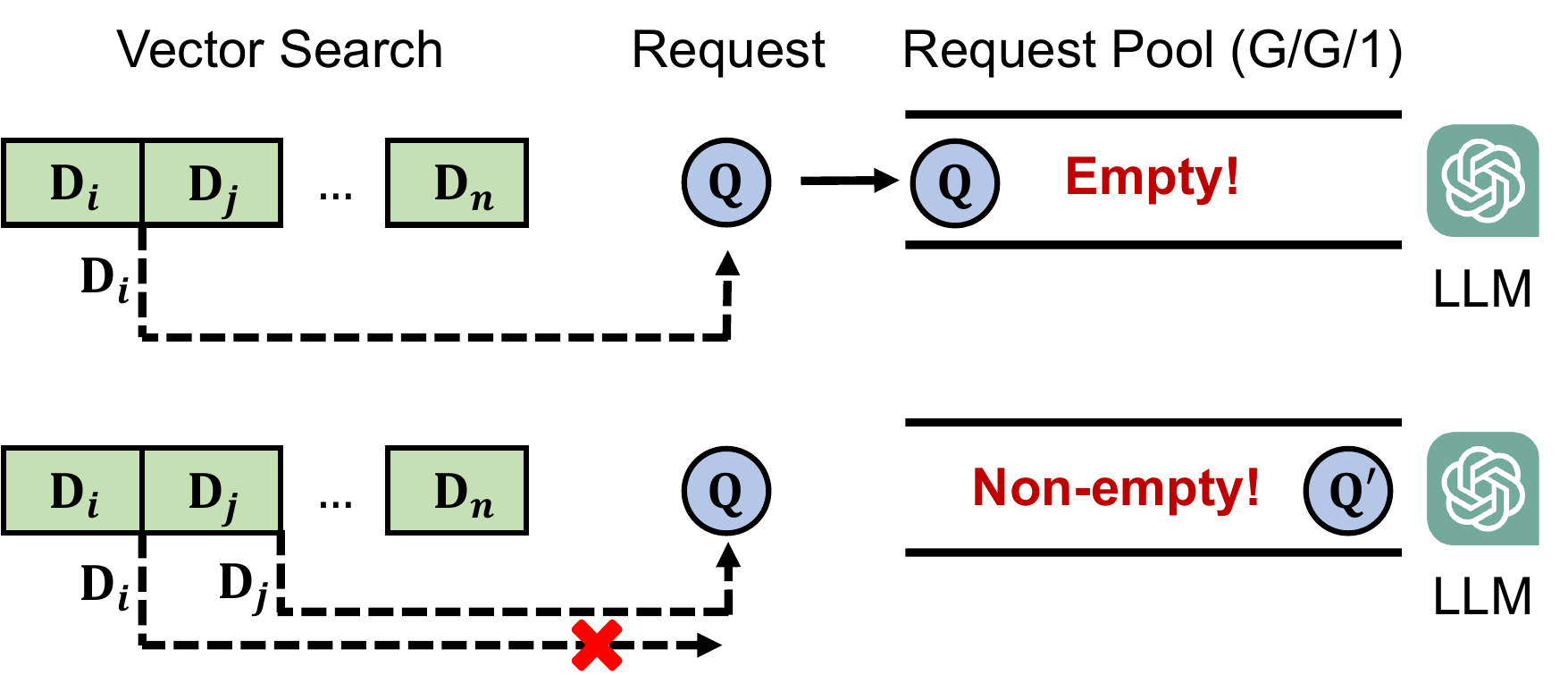}
    \vspace{-0.1in}
    \caption{The optimal speculative pipelining strategy in the simplified analysis.}
    \vspace{-0.2in}
    \label{fig:design:dynamic_speculative}
\end{figure}

\parabf{Dynamic speculative generation.} The speculative pipelining allows \sysname to overlap the retrieval and generation steps,
which reduces the end-to-end latency of RAG systems. However, it may 
introduce extra LLM computation as some speculative generations are incorrect, potentially leading to 
performance degradation under high system loads. To address this problem, \sysname dynamically 
enables speculative pipelining based on the system load.
We begin with a simplified analysis to demonstrate how to minimize the end-to-end latency while 
keeping the system load under control.

For simplicity, we assume the vector search and the LLM both serve one request at a time. 
The vector search produces candidate retrieval results at the end of each stage with a fixed time interval $d$.
Since the batch size is one, we can immediately terminate any incorrect speculative generation request.
We model the LLM engine with a G/G/1 queue~\cite{shortle2018fundamentals}. 
Since priority scheduling is widely used in LLM serving (e.g., MLFQ in FastServe~\cite{wu2023fast} and 
cache-aware reordering in \sysname), we assume that the LLM engine can schedule the requests in the queue 
with arbitrary order but executes the speculation generation requests from a single request in order. 
Figure~\ref{fig:design:dynamic_speculative} shows the optimal speculative pipelining strategy under this setting, 
which we summarize as a theorem below.
\begin{theorem}
\label{theorem:speculative}
Let $d$ be the speculative time interval and $T$ be the serving time distribution for the LLM in the G/G/1 
request pool, satisfying $T \geq d$ for all serving events. The optimal strategy 
is: start a speculative generation if the request pool is empty after a stage ends with a 
different document sequence; if not, continue vector search until the pool empties.
\end{theorem}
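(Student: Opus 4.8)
The plan is to treat this as a scheduling-optimization problem over the discrete stage boundaries of the retrieval, and to establish optimality by a local exchange argument at each boundary, composed into a global statement by induction over stages. I would first fix the objective precisely: minimize the request's end-to-end latency (the time at which its \emph{correct} generation---the one whose document sequence matches the final top-$k$---completes), subject to the load constraint that speculative work never postpones any non-speculative generation. The decision at each boundary is binary: given the current candidate sequence and the pool status, either issue a fresh speculative generation on the current candidate or continue retrieving. The state I track is (i) whether the pool is empty, (ii) the current candidate, and (iii) which candidate, if any, the LLM is currently speculating on. The two cases of the theorem correspond exactly to the pool being empty or non-empty at a boundary where the candidate has just changed.

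For the \textbf{pool-empty} case I would show that issuing weakly dominates idling. Because idle LLM time carries zero opportunity cost under the load constraint, the only effect of starting a speculation on the freshly changed candidate is a potential head start on the correct generation: if this candidate turns out to be final, the request finishes strictly earlier, and if not, the hypothesis $T \ge d$ guarantees that at least one further stage completes before the speculation finishes, so the mismatch is detected and the speculation is terminated immediately (batch size one) at no cost to any other request. Formally, I would take any policy that idles at such a boundary, splice in the speculation, and verify that the completion time cannot increase while the load constraint is preserved---an exchange that is always available because the inserted work occupies only otherwise-idle time.

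For the \textbf{pool-non-empty} case I would show the reverse: deferring (continue vector search, issue nothing) weakly dominates issuing now. The key observation is that, under the load constraint, a speculation issued while the pool is busy cannot begin executing until the pool first empties, at some time $t^\ast$; issuing it earlier only fixes \emph{which} (possibly stale) candidate it commits to, without advancing its start time. By $T \ge d$, at least one additional retrieval stage completes before $t^\ast$, so at $t^\ast$ the freshest available candidate is at least as current as---and weakly more likely to match the final top-$k$ than---the one an early issue would have locked in. Hence deferring never increases latency and never increases wasted computation. The exchange here moves the speculation start from the early boundary to $t^\ast$ and re-binds it to the freshest candidate; I would check that this move is always feasible and weakly improving in both objectives.

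The step I expect to be the main obstacle is lifting these per-boundary exchanges to a \emph{global} optimum, because the scheduling model permits the LLM to run requests in arbitrary order, which a priori allows an early speculation to be prioritized ahead of pending non-speculative work. I would neutralize this by reading ``keeping the system load under control'' as precisely the constraint that speculative generation never postpones non-speculative work; under that reading the non-empty analysis goes through, since preemption of real work is disallowed. The remaining difficulty is compositional: I must argue the local swaps do not interfere, so that greedily applying them yields the stated policy rather than merely a locally optimal one. I plan to handle this by a backward induction over stages (equivalently, by showing each swap is monotone and leaves later decisions' feasible sets and payoffs unchanged), and to dispatch the ``different document sequence'' qualifier separately: when the candidate is unchanged, any running speculation is already on the correct track, so no restart is warranted and an empty pool offers nothing new to speculate on.
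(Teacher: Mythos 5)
Your proposal reaches the same conclusion but by a genuinely different route. The paper's proof is a flat four-way case enumeration over (pool empty or not) $\times$ ($D_i$ final or not), arguing informally in each cell which choice is better: starting wins when the pool is empty and $D_i$ is final, deferring wins when the pool is non-empty and $D_i$ is not final, both are equal when the pool is empty and $D_i$ is not final (idle GPU), and---notably---the paper claims starting also wins when the pool is non-empty and $D_i$ \emph{is} final, on the grounds that inserting early gives the scheduler more options. You instead decompose only on the observable state (pool status plus whether the candidate changed), treat finality as unknown, and run an exchange/dominance argument in each branch, with an explicit load constraint (speculation never postpones real work) and a backward induction to compose the local swaps. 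Your treatment of the non-empty case is the substantive divergence: you argue deferral weakly dominates \emph{uniformly}, because a constrained speculation cannot start before the pool empties at $t^{\ast}$ anyway, and by $T \ge d$ waiting only rebinds it to a fresher candidate. This is cleaner and actually supports the theorem as stated, whereas the paper's case (3) pulls in the opposite direction from its own strategy and is reconciled only by reading ``$D_i$ is final'' as ``the search has already terminated.'' Your version also puts the hypothesis $T \ge d$ to work, which the paper's proof never explicitly invokes. What the paper's enumeration buys is a direct accounting of the payoff in each possible world, including the observation that wrong speculation on an empty pool costs nothing; what your version buys is a decision rule that depends only on information available at decision time, an explicit statement of the load-control assumption the theorem silently needs, and attention to the local-to-global step that the paper skips entirely. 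The one caveat is that your load-constraint reading slightly restricts the paper's model (which allows arbitrary reordering of the pool); under the paper's unrestricted scheduler your non-empty exchange needs the extra observation that prioritizing a possibly stale speculation over pending real work can only be justified if it is final, which is unknowable at that point.
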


\begin{proof}
    After the $i$-th stage, the vector search produces a 
    document sequence $D_i$. There are four cases.

    (1) The pool is empty and $D_i$ is the final result. In this case, deferring the speculative generation
    incurs additional waiting time, while starting the speculative generation immediately utilizes the LLM
    generation engine. As a result, starting a speculative generation leads to a lower latency.

    (2) The pool is non-empty and $D_i$ is not the final result. In this case, inserting the speculative 
    generation may delay the request in the pool due to the reordering policy. Furthermore, computing the 
    speculative generation is unnecessary. In such case, deferring the speculative generation is more efficient.

    (3) The pool is non-empty and $D_i$ is the final result. In this case, inserting the speculative
    generation into the LLM request pool allows the LLM engine scheduler to conduct more efficient scheduling 
    optimizations. It is more efficient to start the speculative generation immediately.

    (4) The pool is empty and $D_i$ is not the final result. The incorrect speculative generation is
    terminated immediately upon vector search produces the first differing result. Though the speculative generation is
    incorrect, it utilizes only the idle GPU resources. Both deferring and starting the speculative generation
    are equally efficient.

    This concludes the proof.
\end{proof}

\begin{algorithm}[t]
    \caption{Dynamic Speculative Pipelining Strategy}
    \label{alg:design:speculative}
    \begin{footnotesize}
    \begin{algorithmic}[1]
        \Function{DYNAMIC\_SPECULATIVE\_PIPELINING}{$request$}
            \State $D \gets []$
            \While {the vector search of $request$ is not finished} 
                \State \textbf{// Produce the candidate documents at the next stage}
                \State $D_{temp} \gets $ \Call{VECTOR\_SEARCH}{$request$, $D$} 
                \If {$D_{temp} \neq D$}
                    \If {$\{request, D\}$ in $pool$}
                        \State Terminate $\{request, D\}$ after the current iteration
                    \EndIf
                    \If {$pool.size < max\_prefill\_bs$}
                        \State $pool.insert(\{request, D_{temp}\})$
                    \EndIf
                \EndIf
                \State $D \gets D_{temp}$
            \EndWhile
        \EndFunction
    \end{algorithmic}
    \end{footnotesize}
\end{algorithm}

The general RAG system is more complex with larger LLM batch sizes and parallel vector search. In addition, 
we cannot terminate a speculative generation immediately when the request is batched with other requests. 
Based on Theorem~\ref{theorem:speculative}, we design a dynamic speculative pipelining strategy in Algorithm~\ref{alg:design:speculative}. The main idea is to start a speculative generation only if the 
retrieved documents change and the number of pending LLM requests falls below a predetermined maximum batch size 
for the prefill iteration ($max\_prefill\_bs$). The maximum prefill batch size is determined by the smaller 
number of tokens that can either fit within the GPU memory or fully utilize the SMs. The strategy terminates 
the incorrect speculative generation after the current LLM iteration, which does not affect other requests in the 
batch. This strategy overlaps the retrieval and generation steps as much as possible while gauging the system load.
\section{Implementation}
\label{sec:implementation}

\begin{table}[t!]
    \centering
    \arrayrulewidth=0.5pt
    \extrarowheight=1pt
    \resizebox{\linewidth}{!}{
        \begin{tabular}{cccccc}
            \arrayrulecolor{black}\hline
            \arrayrulecolor{black}\hline
            \textbf{Model} & \textbf{Layers} & \textbf{Q/KV Heads} & \textbf{MoE} & \textbf{Model Size} & \textbf{KV Size} \\ 
            \hline
            Mistral-7B & 32 & 32/8 & no & 14 GiB & 0.125 MiB/token \\ 
            LLaMA2-7B & 32 & 32/32 & no & 14 GiB & 0.5 MiB/token \\ 
            Mixtral-8$\times$7B & 32 & 32/8 & yes & 96.8 GiB & 0.125 MiB/token \\
            LLaMA2-70B & 80 & 64/8 & no & 140 GiB & 0.3125 MiB/token \\ 
            \arrayrulecolor{black}\hline
            \arrayrulecolor{black}\hline
        \end{tabular}
    }
    \vspace{-0.05in}
    \caption{Models used in the evaluation.}
    \vspace{-0.25in}
    \label{tab:evaluation:models}
\end{table}

We implement a system prototype of \sysname with $\sim$5000 lines of code in C++ and Python.
Our implementation is based on vLLM~\cite{vllm} v0.3.0, a state-of-the-art LLM serving system.
We extend its prefill kernel in Pytorch~\cite{paszke2019pytorch} and Triton~\cite{tillet2019triton} to support 
prefix caching for different attention mechanisms, e.g., multi-head attention~\cite{vaswani2017attention} and 
grouped-query attention~\cite{ainslie2023gqa}.

\parabf{Pipelined vector search.} We implement dynamic speculative pipelining on top of 
Faiss~\cite{pinecone-faiss}, an open-source widely-used vector database, and adapt it for 
two types of indexes: IVF~\cite{babenko2014inverted} and HNSW~\cite{malkov2018efficient}.
IVF divides the vector space into multiple clusters and stores the vectors in the corresponding clusters.
During vector search, IVF first locates the top-$n$ closest clusters to the request vector and subsequently searches within these clusters.
HNSW constructs multi-level graphs to map the vector space, connecting vectors with edges to represent similarity.
HNSW searches the vectors by traversing the graph and maintains a candidate list to store the current top-$k$ nearest vectors.
To support pipelined vector search, we modify the search process of these two indexes.
Specifically, we split the IVF search into multiple stages, each searching the vectors in some clusters and 
returning the current top-$k$ vectors.
For HNSW, we measure the average search time for a specified search configuration and split the
entire time into smaller time slices. After each time slice of searching, it returns the current top-$k$ vectors.
These modifications aim to facilitate the dynamic speculative pipeline while maintaining the final search semantics.

\parabf{Fault tolerance.} We implement two fault-tolerant mechanisms in \sysname to handle GPU failures and 
request processing failures. The GPU memory serves as \sysname's first-level cache, storing the 
KV cache of the upper-level nodes in the knowledge tree hierarchy. Given the prefix sensitivity of LLM 
inference, a GPU failure would invalidate the lower-level nodes and therefore the entire tree. We replicate 
a portion of the most frequently accessed upper-level nodes (e.g., the system prompt) 
in the host memory for fast recovery. We also employ a timeout mechanism to 
retry the failed requests. If a request fails before completing its first iteration, it will be 
recomputed. Otherwise, the request can continue computation by reusing the stored KV cache.

\section{Evaluation}
\label{sec:evaluation}

In this section, we evaluate \sysname from the following aspects: $(i)$ overall performance 
against state-of-the-art approaches; $(ii)$ performance under general settings; $(iii)$ 
ablation studies on the techniques used in \sysname; and $(iv)$ scheduling time of \sysname. 

\parabf{Testbed.} Most of our experiments are conducted on AWS EC2 g5.16xlarge instances, each 
with 64 vCPUs (AMD EPYC 7R32), 256 GiB host memory, and 25 Gbps NIC.
Each instance is configured with one NVIDIA A10G GPU with 24 GiB memory and 
the GPU is connected to the host via PCIe 4.0$\times$16. 
We run experiments with 7B models on a single g5.16xlarge instance and 
use 192 GiB host memory for caching unless otherwise stated. 
For large models, we use two NVIDIA H800 GPUs, each with 80 GiB memory and 
interconnected by NVLink. The two GPUs are connected to the host via PCIe 5.0$\times$16. 
We use 384 GiB host memory for caching in this case.

\parabf{Models.} We evaluate \sysname with the LLaMA 2 chat models~\cite{touvron2023llama} 
and the Mistral AI models~\cite{jiang2023mistral, jiang2024mixtral}. 
The model details are listed in Table~\ref{tab:evaluation:models}. Most of the experiments are conducted 
with Mistral-7B and LLaMA2-7B. The two models have the same size but employ different attention mechanisms, 
i.e., grouped-query attention and multi-head attention. 
We also evaluate \sysname with large models, Mixtral-8$\times$7B and LLaMA2-70B, to demonstrate the 
scalability of \sysname. Mixtral-8$\times$7B is a mixture-of-experts (MoE) model with eight experts, 
and two experts are activated for each token. We deploy the large models on two H800 80GB GPUs for tensor 
parallelism and expert parallelism. 

\parabf{Retrieval.} We use the Wikipedia dataset collected in \S~\ref{sec:characterization:opp} as 
the knowledge base. For vector search, we use the IVF index with 1024 clusters and set the default top-$k$ to 2. 
We deploy the vector database with four separate vCPUs and 30 GiB host memory on 
the same instance as the GPU and expose a RESTful API for document retrieval.

\parabf{Workloads.} Our evaluation uses two representative QA datasets, MMLU~\cite{hendrycks2020measuring} 
and Natural Questions~\cite{kwiatkowski2019natural}. MMLU is a multi-choice knowledge benchmark 
where the LLM outputs a single token (A/B/C/D) per question. Natural Questions comprises 
anonymized questions from Google Search and provides the reference answers for each question. 
For Natural Questions, we sample the output length for each question from the token length distribution 
of the reference answers. The average output length is 6 tokens, and 99\% of the answers contain no more 
than 32 tokens. We sample a subset of the questions from the dataset, 
respecting the document retrieval distribution in \S~\ref{sec:characterization:opp}, and 
randomly shuffle the questions to generate 1-hour workloads.
In line with prior work~\cite{wu2023fast, vllm}, we assign the arrival time for each request 
using a Poisson process parameterized by the arrival rate. 

\parabf{Metrics.} We report the average time-to-first-token (TTFT) as the main metric. 
We also evaluate the system throughput, which is defined as the request rate that the system 
can process while maintaining a TTFT below a certain threshold, e.g., 5$\times$ of the TTFT 
at the lowest request rate. In addition, we measure the cache hit rate in the ablation study.

\parabf{Baselines.} We compare \sysname with two baselines.
\begin{itemize}[leftmargin=*]
    \item \textbf{vLLM}~\cite{vllm}, a state-of-the-art LLM serving system that supports iteration-level 
    scheduling~\cite{yu2022orca} and uses PagedAttention~\cite{vllm} to reduce memory fragmentation.
    \item \textbf{SGLang}~\cite{sglang}, a high-performance LLM serving system that allows KV cache reuse 
    across different requests in GPU memory and employs LRU as the replacement policy.
\end{itemize}
For fair comparison, the baselines are configured with the same model parallelism, maximum batch size, 
and vector database settings as \sysname. 

\begin{figure}[t!]
    \centering
    \includegraphics[width=\linewidth]{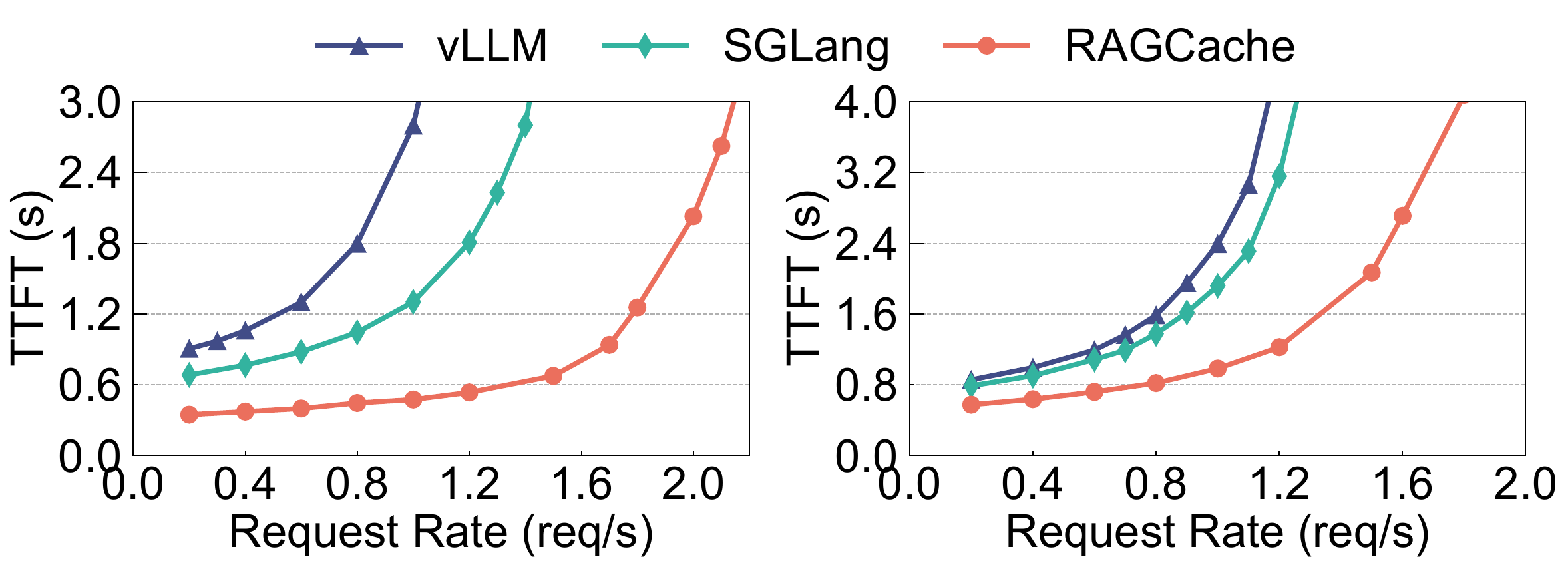}
    \vspace{-0.2in}
    \newline
    \hspace*{0.5em}
        \begin{subfigure}{0.47\linewidth}
            \caption{Mistral-7B.}
            \label{fig:overall:mmlu:mistral-7b}
        \end{subfigure}
        \begin{subfigure}{0.47\linewidth}
            \caption{LLaMA2-7B.}
            \label{fig:overall:mmlu:llama2-7b}
        \end{subfigure}
    \vspace{-0.1in}
    \caption{Overall performance on MMLU.}
    \vspace{-0.1in}
    \label{fig:overall:mmlu}
\end{figure}

\subsection{Overall Performance}
\label{sec:evaluation:overall}

We first compare the overall performance of \sysname against the baselines.
We use MMLU and Natural Questions as the workloads and Mistral-7B and LLaMA2-7B as the models. 
The maximum batch size is set to 4.
We vary the request rate and measure the average TTFT.
Figure~\ref{fig:overall:mmlu} and Figure~\ref{fig:overall:googlenq} show the results on
MMLU and Natural Questions, respectively, which we summarize as follows.
\begin{itemize}[leftmargin=*]
    \item \sysname reduces the average TTFT by 1.2--4$\times$ compared to vLLM and 1.1--3.5$\times$
    compared to SGLang under the same request rate. This is because \sysname utilizes the GPU memory 
    and host memory to cache the KV cache of hot documents and avoids frequent recomputation.
    \item Due to faster request processing, \sysname achieves 1.3--2.1$\times$ higher throughput 
    than vLLM and 1.2--1.8$\times$ higher throughput than SGLang.
    \item \sysname outperforms the baselines across models with varying attention mechanisms 
    on different datasets.
\end{itemize}
The results also reflect the differences between the two models and datasets. 
Notably, the performance gap between \sysname and vLLM is greater for Mistral-7B than for LLaMA2-7B
This is because LLaMA2-7B has a KV cache size 4$\times$ that of Mistral-7B for the same token count.
resulting in a lower cache hit rate for LLaMA2-7B with the same cache size.
According to our characterization in \S~\ref{sec:characterization:opp}, 
MMLU benefits more from document caching than Natural Questions, 
with a wider performance improvement for MMLU than for Natural Questions.
SGLang performs closely to vLLM for Natural Questions 
because the limited GPU memory restricts document locality.
\sysname, however, with its multilevel caching and the adapted knowledge tree,  
outperforms in both datasets.

\begin{figure}[t!]
    \centering
    \includegraphics[width=\linewidth]{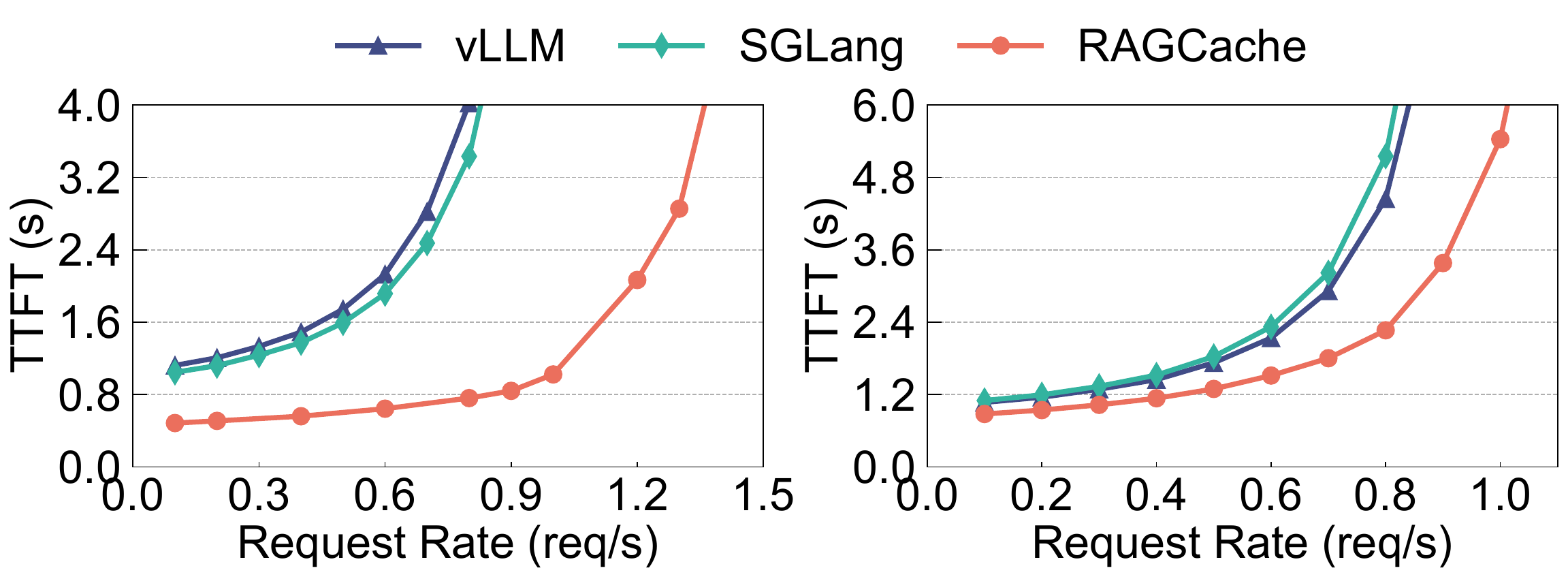}
    \vspace{-0.2in}
    \newline
    \hspace*{0.5em}
        \begin{subfigure}{0.47\linewidth}
            \caption{Mistral-7B.}
            \label{fig:overall:googlenq:mistral-7b}
        \end{subfigure}
        \begin{subfigure}{0.47\linewidth}
            \caption{LLaMA-2-7B.}
            \label{fig:overall:googlenq:llama2-7b}
        \end{subfigure}
    \vspace{-0.1in}
    \caption{Overall performance on Natural Questions.}
    \vspace{-0.1in}
    \label{fig:overall:googlenq}
\end{figure}

\begin{figure*}[t!]
    \centering
    \includegraphics[width=\linewidth]{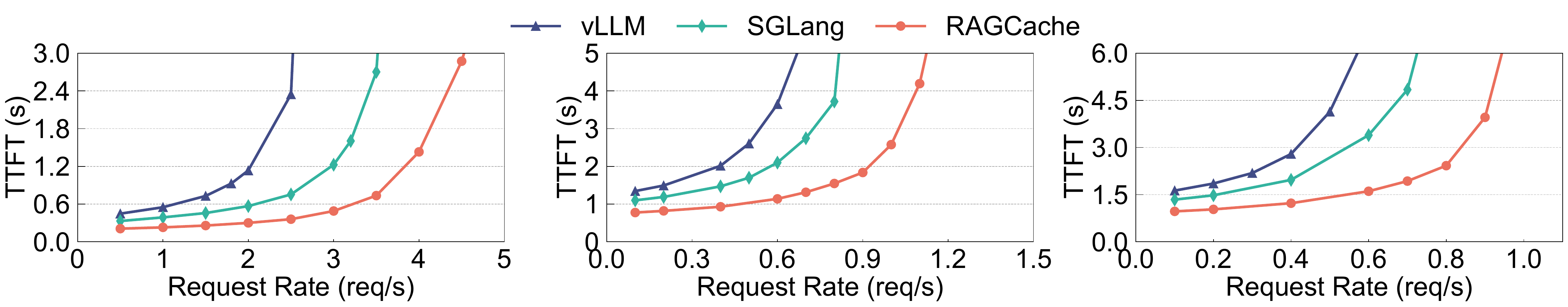}
    \vspace{-0.2in}
    \newline
    \hspace*{0.5em}
        \begin{subfigure}{0.32\linewidth}
            \caption{Top-1.}
            \label{fig:topk:mmlu:top1}
        \end{subfigure}
        \begin{subfigure}{0.32\linewidth}
            \caption{Top-3.}
            \label{fig:topk:mmlu:top3}
        \end{subfigure}
        \begin{subfigure}{0.32\linewidth}
            \caption{Top-5.}
            \label{fig:topk:mmlu:top5}
        \end{subfigure}
    \vspace{-0.1in}
    \caption{Performance with different top-$k$ values.}
    \vspace{-0.1in}
    \label{fig:topk:mmlu}
\end{figure*}

\subsection{Case Study}

We then conduct two case studies to demonstrate the benefits of \sysname
over the baselines 
under general settings. We use MMLU and Mistral-7B in the case studies.

\parabf{Different top-$k$ values.} Users may have varying requirements for the number of retrieved documents. 
We evaluate the performance of \sysname and the baselines with commonly used top-$k$ values: 1, 3, and 5.
We set the maximum batch size to 4 and truncate the documents in the top-5 experiment to fit within GPU capacity
limits. Figure~\ref{fig:topk:mmlu} shows that \sysname outperforms vLLM 
by 1.7--3.1$\times$ and SGLang by 1.2--2.5$\times$ in average TTFT across these top-$k$ values.
Despite the factorial growth in document permutations with increasing top-$k$ values, 
\sysname maintains its advantage by caching frequently-used documents. This is because the knowledge tree 
always evicts the node furthest from the root, ensuring that the most frequently used prefixes 
remain in the cache.

\begin{figure}[t!]
    \centering
    \includegraphics[width=\linewidth]{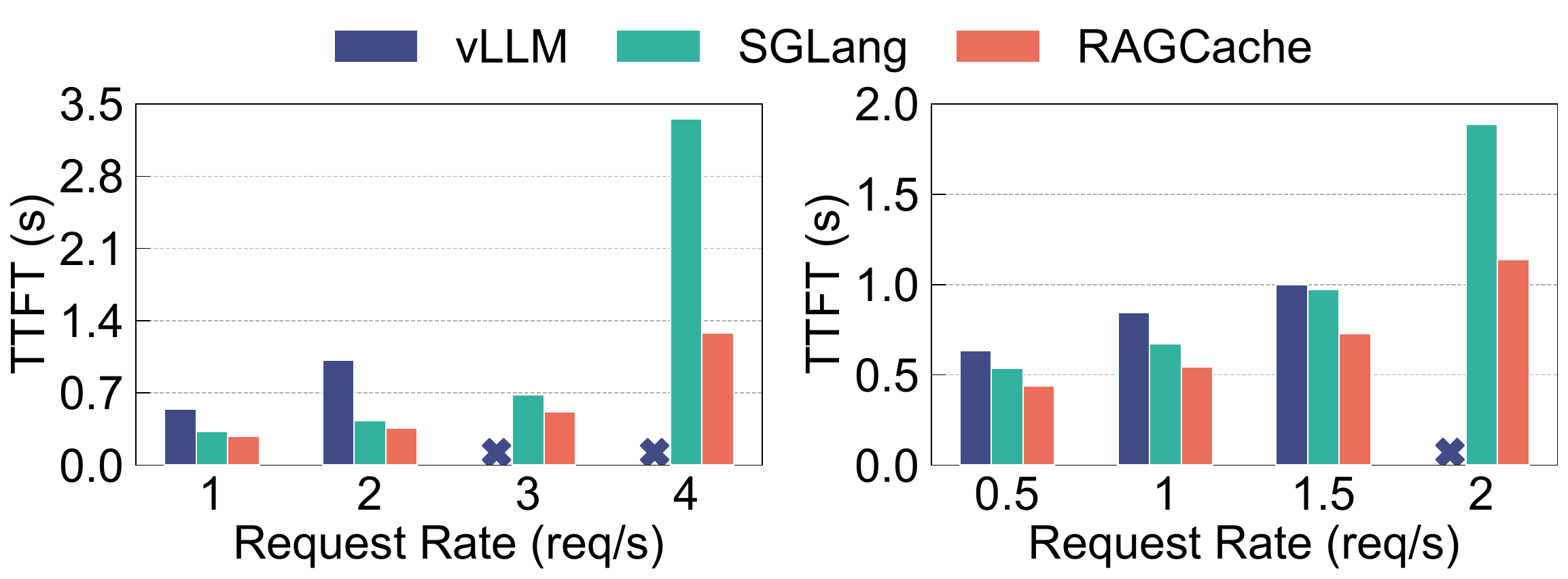}
    \vspace{-0.2in}
    \newline
    \hspace*{0.5em}
        \begin{subfigure}{0.47\linewidth}
            \caption{Mixtral-8$\times$7B.}
            \label{fig:case:large:mixtral-8x7b}
        \end{subfigure}
        \begin{subfigure}{0.47\linewidth}
            \caption{LLaMA2-70B.}
            \label{fig:case:large:llama2-70b}
        \end{subfigure}
    \vspace{-0.1in}
    \caption{Performance under large models.}
    \vspace{-0.1in}
    \label{fig:case:large}
\end{figure}

\parabf{Large models.} The second case study evaluates \sysname with larger models using MMLU as the workload.
We deploy Mixtral-8$\times$7B and LLaMA2-70B on two H800 80GB GPUs. 
For each model, we set the maximum batch size to the lesser of what fits in GPU memory or fully utilizes the SMs 
(e.g., 8 for Mixtral-8$\times$7B and 4 for LLaMA2-70B). 
With a limited budget, we run the experiments with four different request rates per model and 
set a TTFT SLO at 5$\times$ the TTFT of the lowest request rate. 
Figure~\ref{fig:case:large} shows that under low request rates, \sysname reduces the average 
TTFT by 1.4--2.1$\times$ compared to vLLM. vLLM fails to meet the SLO above 2 and 1.5 requests per second for 
Mixtral-8$\times$7B and LLaMA2-70B, respectively, whereas \sysname 
maintains the TTFT below 1.4 seconds across varying request rates. 
SGLang performs better on H800 than on A10G GPUs due to increased GPU memory for caching, but 
\sysname still surpasses SGlang by 1.2--2.6$\times$ in average TTFT.

\subsection{Ablation Study}
\label{sec:evaluation:ablation}

\paraf{Prefix-aware GDSF policy.} 
We compare \sysname with versions of \sysname that use native GDSF, LRU, and LFU as the replacement policy.
For the GDSF policy, we set the recomputation cost of a document proportional to the document size, 
which aligns with our profiling results in Figure~\ref{fig:characterization:inference}.
We vary the host memory size for caching from 8 GiB to 128 GiB, set the request rate to 0.8 req/s, 
and report the hit rate and average TTFT. The hit rate for top-2 retrieval is defined as 
the ratio of the number of hit documents to the number of retrieved documents. For example, 
if the stored document sequence is $[D_1, D_2]$ and the requested one is $[D_1, D_3]$, then 
the hit rate would be 50\%.
Figure~\ref{fig:ablation:policy} shows the hit rate for MMLU and Natural Questions, 
and Table~\ref{tab:evaluation:ablation:policy} lists the corresponding average TTFT. 
PGDSF achieves the highest hit rate across different host memory sizes, 
with a 1.02--1.32$\times$ improvement over GDSF, 1.06--1.62$\times$ improvement over LRU, and 
1.06--1.75$\times$ improvement over LFU. 
This is because PGDSF captures the varying sizes, access patterns, and recomputation costs of different 
document prefixes. 
With the higher hit rate, \sysname achieves 1.05--1.29$\times$ lower average TTFT than the 
baseline policies.

\parabf{Cache-aware reordering.} Then we evaluate the impact of cache-aware reordering. 
Reordering works when the request queue is saturated. We set the request rate to 2.5 req/s for MMLU and 
1.4 req/s for Natural Questions, which are slightly higher than the throughput of \sysname. 
We set the reordering window size to 32 and vary the host memory size from 16 GiB to 128 GiB.
Figure~\ref{fig:ablation:reorder} shows that \sysname 
reduces the average TTFT by 1.2--2.1$\times$ with cache-aware reordering, 
which demonstrates its effectiveness under high request rates.

\begin{figure}[t!]
    \centering
    \includegraphics[width=\linewidth]{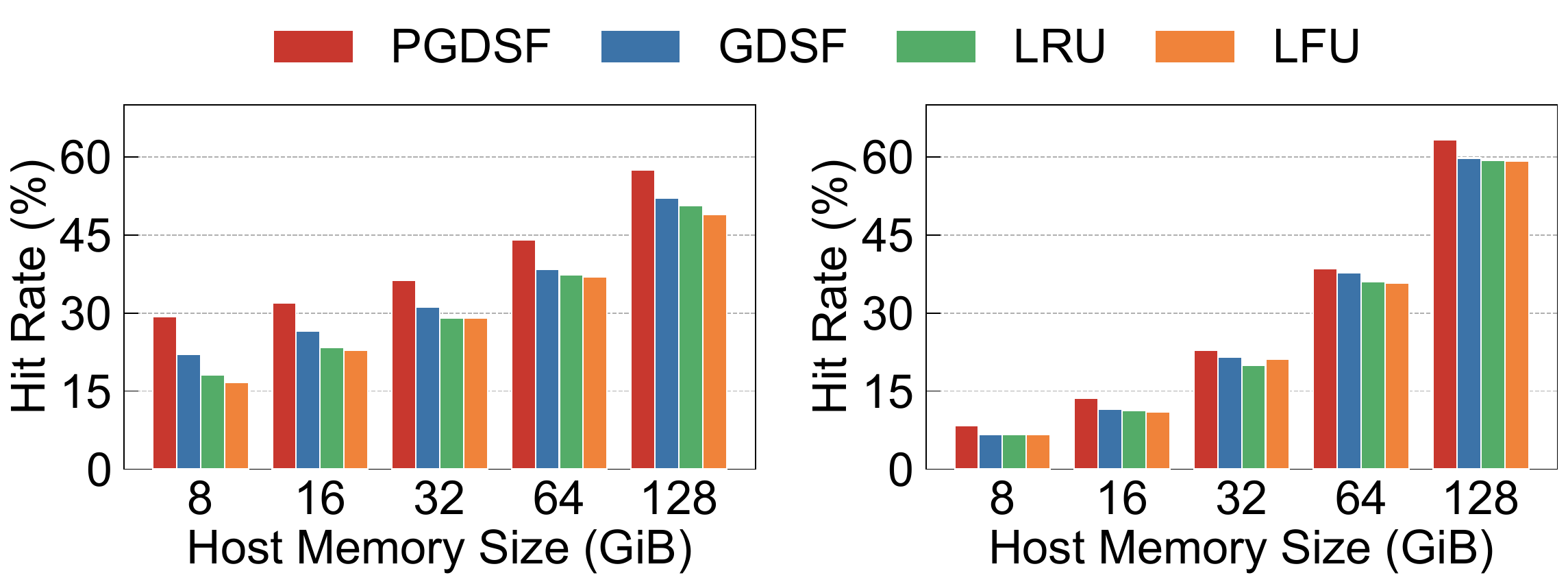}
    \vspace{-0.2in}
    \newline
    \hspace*{0.5em}
        \begin{subfigure}{0.47\linewidth}
            \caption{MMLU.}
            \label{fig:ablation:policy:mmlu}
        \end{subfigure}
        \begin{subfigure}{0.47\linewidth}
            \caption{Natural Questions.}
            \label{fig:ablation:policy:googlenq}
        \end{subfigure}
    \vspace{-0.1in}
    \caption{Ablation study on cache replacement policy.}
    \vspace{-0.1in}
    \label{fig:ablation:policy}
\end{figure}

\begin{table}[t!]
    \centering
    \arrayrulewidth=0.5pt
    \extrarowheight=1pt
    \resizebox{\linewidth}{!} {
        \begin{tabular}{c|c|c|c|c|c|c|c|c}
            \arrayrulecolor{black}\hline
            \arrayrulecolor{black}\hline
            \multirow{2}{*}{\begin{tabular}[c]{@{}c@{}}Host Memory\\Size\end{tabular}} & \multicolumn{4}{c|}{MMLU} & \multicolumn{4}{c}{Natural Questions}\\ 
            \cline{2-9} 
            & \textbf{PGDSF} & \textbf{GDSF} & \textbf{LRU} & \textbf{LFU} & \textbf{PGDSF} & \textbf{GDSF} & \textbf{LRU} & \textbf{LFU} \\ 
            \hline
            8 GiB & 1.38 & 1.68 & 1.78 & 1.81 & 2.85 & 3.35 & 3.41 & 3.36 \\ 
            16 GiB & 1.32 & 1.55 & 1.61 & 1.63 & 2.50 & 2.89 & 2.92 & 2.98 \\ 
            32 GiB & 1.23 & 1.45 & 1.50 & 1.49 & 2.00 & 2.09 & 2.25 & 2.20 \\ 
            64 GiB & 1.06 & 1.27 & 1.28 & 1.29 & 1.32 & 1.47 & 1.56 & 1.55 \\
            128 GiB & 0.83 & 0.98 & 1.01 & 1.03 & 0.78 & 0.92 & 0.95 & 0.95 \\
            \arrayrulecolor{black}\hline
            \arrayrulecolor{black}\hline
        \end{tabular}
    }
    \vspace{0.05in}
    \caption{Average TTFT (seconds) of different replacement policies with varying host memory size.}
    \vspace{-0.2in}
    \label{tab:evaluation:ablation:policy}
\end{table}

\parabf{Dynamic speculative pipelining.} Finally, we evaluate the effectiveness of dynamic 
speculative pipelining against a baseline, No Dynamic Speculative Pipelining (No DSP), 
which waits for vector search completion before starting LLM generation.
We vary the ratio of the number of searched vectors to the total number of vectors from 12.5\% to 100\%. 
Note that while the search accuracy increases with the vector search ratio, the search time also extends. 
We use MMLU and Natural Questions as the workloads and set the request rate to 0.1 req/s. 
Figure~\ref{fig:ablation:speculative} demonstrates that \sysname achieves up to 1.6$\times$ 
TTFT reduction with dynamic speculative pipelining. Table~\ref{tab:evaluation:ablation:speculative} presents the 
average non-overlapping vector search time, which refers to the duration that the vector search 
does not overlap with the LLM generation using the final retrieval result.
Dynamic speculative pipelining allows \sysname to decrease non-overlapping vector search time by 
1.5--4.3$\times$ and leads to a lower TTFT.

\subsection{Scheduling Time}
\label{sec:evaluation:overhead}

We measure \sysname's scheduling time, including the time for knowledge tree lookup and update, 
request reordering, and speculative pipelining decisions.
Using MMLU as the workload and Mistral-7B as the model, we range the request rate 
from 0.5 to 2 req/s. Table~\ref{tab:evaluation:scheduling_time} indicates that the scheduling time 
remains below one millisecond across all request rates, which is negligible compared to the second-level TTFT.

\begin{figure}[t!]
    \centering
    \includegraphics[width=\linewidth]{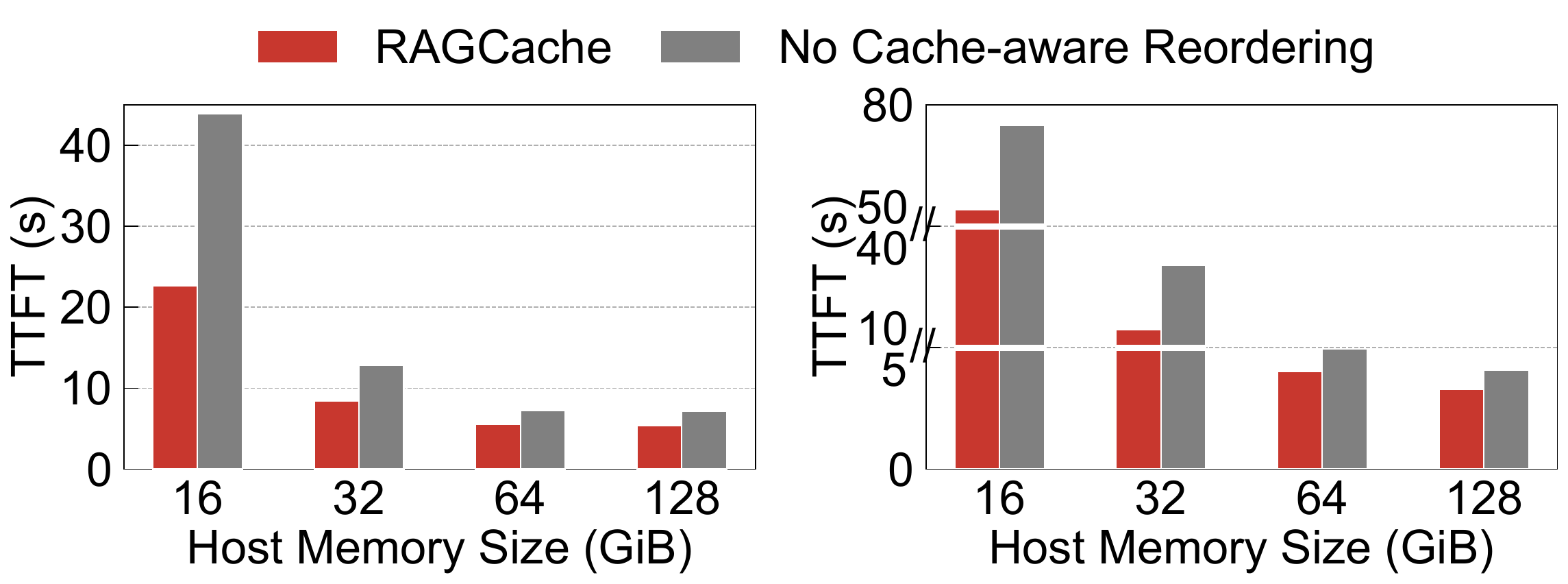}
    \vspace{-0.2in}
    \newline
    \hspace*{0.5em}
        \begin{subfigure}{0.47\linewidth}
            \caption{MMLU.}
            \label{fig:ablation:reorder:mmlu}
        \end{subfigure}
        \begin{subfigure}{0.47\linewidth}
            \caption{Natural Questions.}
            \label{fig:ablation:reorder:googlenq}
        \end{subfigure}
    \vspace{-0.1in}
    \caption{Ablation study on cache-aware reordering.}
    \vspace{-0.1in}
    \label{fig:ablation:reorder}
\end{figure}

\begin{figure}[t!]
    \centering
    \includegraphics[width=\linewidth]{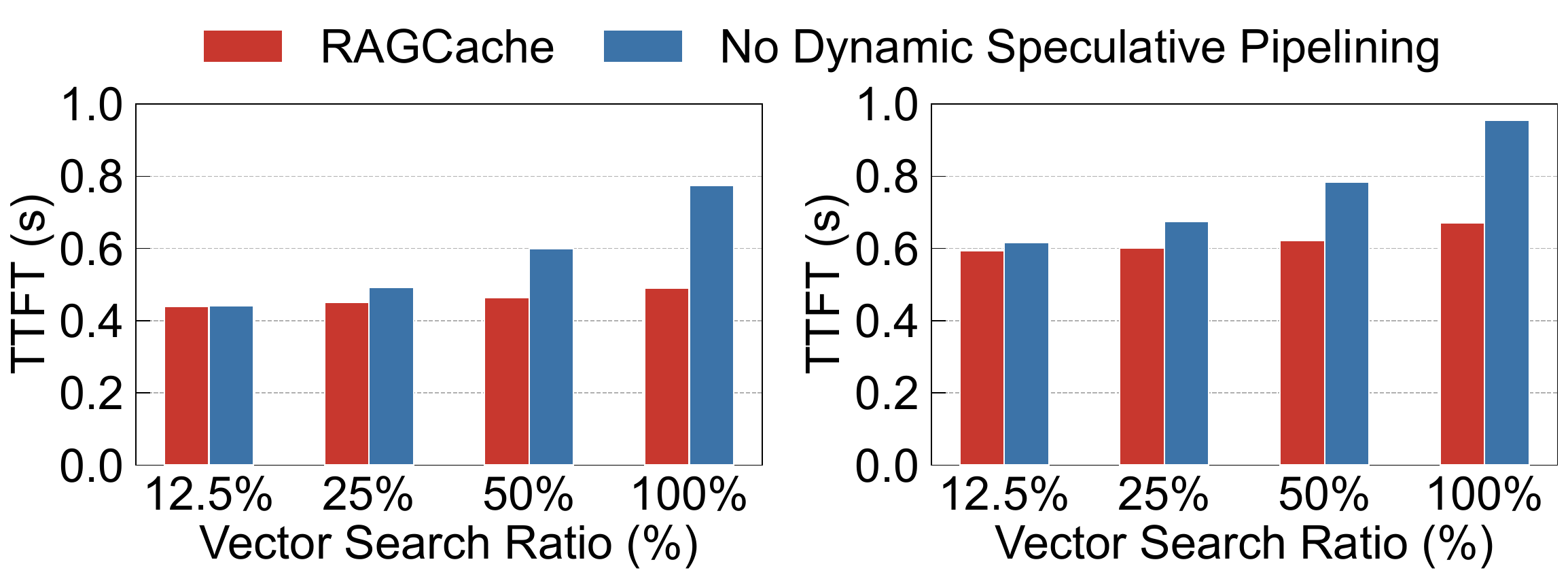}
    \vspace{-0.2in}
    \newline
    \hspace*{0.5em}
        \begin{subfigure}{0.47\linewidth}
            \caption{MMLU.}
            \label{fig:ablation:speculative:mmlu}
        \end{subfigure}
        \begin{subfigure}{0.47\linewidth}
            \caption{Natural Questions.}
            \label{fig:ablation:speculative:googlenq}
        \end{subfigure}
    \vspace{-0.1in}
    \caption{Ablation study on speculative pipelining.}
    \vspace{-0.1in}
    \label{fig:ablation:speculative}
\end{figure}

\section{Discussion}
\label{sec:discussion}

\paraf{Time per output token (TPOT).} In addition to time to first token (TTFT), 
TPOT is crucial for LLM serving~\cite{zhong2024distserve}. 
RAG augments the request with documents retrieved from the external knowledge base, which significantly increases
the input length and thus the latency of the prefill phase, i.e., TTFT. 
Consequently, the primary concern for RAG systems is the prolonged TTFT due to the extended input length. 
\sysname reduces TTFT by caching the KV cache of the most frequently retrieved documents and 
can also lower TPOT by accelerating the prefill iteration, 
as decoding iterations typically take far less time than the prefill iteration~\cite{wu2023fast}.

\parabf{Large top-$k$.} As the top-k value increases, the number of document permutations explodes in factorial growth,
making them less likely to be reused. \sysname mitigates this by caching documents with a lower top-$k$ value 
(e.g., caching the top-3 documents for requests with top-5 documents), thereby achieving a balance between 
hit rate and cache efficiency.

\section{Related Work}
\label{sec:related}

\paraf{RAG.} RAG~\cite{lewis2020retrieval, jiang2024piperag, borgeaud2022improving, ram2023context, trivedi2022interleaving, langchain} 
enhances the generation quality of LLMs by incorporating 
relevant knowledge from external databases. 
Several works~\cite{jiang2024piperag, ram2023context, trivedi2022interleaving, zhang2024accelerating} suggest 
iterative retrieval throughout generation to further improve the response quality. 
\sysname supports iterative retrieval by treating the intermediate iterations as separate requests and caching 
the corresponding KV cache of the documents.

\parabf{Vector search.} RAG systems convert user prompts into vectors and 
uses approximate nearest neighbor (ANN) indexes like IVF~\cite{babenko2014inverted, chen2021spann, zhang2023fast, zhang2024fast} 
and graph indexes~\cite{malkov2018efficient, fu2019fast, jayaram2019diskann} for efficient and accurate similarity search. 
\sysname extracts the temporary search results for speculative LLM generation and thus pipelines 
the search process with LLM inference.

\parabf{KV cache management.} KV cache is widely used to accelerate the decoding phase of LLM 
inference~\cite{vllm, yu2022orca, zhong2024distserve, wu2023fast, xiao2023efficient}. 
Recent efforts aim to reduce the KV cache's memory footprint by  
quantization~\cite{dettmers2022gpt3}, compression~\cite{li2020train, liu2024scissorhands, ge2023model}, and 
self-attention with a subset of tokens~\cite{xiao2023efficient, zhang2024h2o}. 
These methods introduce approximation to the generation process, while \sysname preserves the exact 
KV cache of documents without affecting generation quality.
Inspired by virtual memory in operating systems, vLLM~\cite{vllm} manages the KV 
cache at page granularity and proposes PagedAttention to prevent external fragmentation. 
\sysname integrates the page-level management for KV cache sharing and improves over vLLM by 
leveraging the characteristics of RAG to cache the KV cache of the knowledge documents.

\parabf{KV cache reusing.} Recent efforts~\cite{gim2023prompt, liu2023cachegen, sglang, ye2024chunkattention} 
propose to reuse the KV cache across requests to reduce redundant computation. 
Prompt Cache~\cite{gim2023prompt} allows flexible reuse of the same tokens at different positions, while
CacheGen~\cite{liu2023cachegen} compresses the KV cache for efficient reuse. 
Both approaches may generate inaccurate responses. SGLang~\cite{sglang} and ChunkAttention~\cite{ye2024chunkattention} 
identify the reusable KV cache in GPU memory. \sysname leverages RAG's retrieval pattern 
and builds a multilevel caching system, leading to higher performance with unchanged generation results.

\begin{table}[t!]
    \centering
    \arrayrulewidth=0.5pt
    \extrarowheight=1pt
    \resizebox{0.75\linewidth}{!} {
        \begin{tabular}{c|c|c|c|c}
            \arrayrulecolor{black}\hline
            \arrayrulecolor{black}\hline
            \multirow{2}{*}{\begin{tabular}[c]{@{}c@{}}Vector Search\\Ratio\end{tabular}} & \multicolumn{2}{c|}{MMLU} & \multicolumn{2}{c}{Natural Questions}\\ 
            \cline{2-5} 
            & \textbf{\sysname} & \textbf{No DSP} & \textbf{\sysname} & \textbf{No DSP} \\ 
            \hline
            12.5\% & 52.1 ms & 78.5 ms & 67.7 ms & 105.8 ms \\ 
            25\% & 59.2 ms & 135.9 ms & 72.9 ms & 163.4 ms \\ 
            50\% & 69.7 ms & 243.7 ms & 94.2 ms & 282.5 ms \\ 
            100\% & 97.4 ms & 422.3 ms & 145.0 ms & 446.1 ms \\
            \arrayrulecolor{black}\hline
            \arrayrulecolor{black}\hline
        \end{tabular}
    }
    \vspace{0.05in}
    \caption{Average non-overlapping vector search time under different settings.}
    \vspace{-0.1in}
    \label{tab:evaluation:ablation:speculative}
\end{table}

\begin{table}[t!]
    \centering
    \arrayrulewidth=0.5pt
    \extrarowheight=1pt
    \resizebox{0.45\linewidth}{!} {
        \begin{tabular}{c|c}
            \arrayrulecolor{black}\hline
            \arrayrulecolor{black}\hline
            Request Rate & Scheduling Time \\ 
            \hline
            0.5 req/s & 0.880 ms \\ 
            1.0 req/s & 0.872 ms \\
            1.5 req/s & 0.902 ms \\
            2.0 req/s & 0.906 ms \\
            \arrayrulecolor{black}\hline
            \arrayrulecolor{black}\hline
        \end{tabular}
    }
    \vspace{0.05in}
    \caption{Scheduling time of \sysname.}
    \vspace{-0.25in}
    \label{tab:evaluation:scheduling_time}
\end{table}

\section{Conclusion}
\label{sec:conclusion}

We present \sysname, a multilevel caching system tailored for RAG. Based on a detail RAG system characterization, 
\sysname employs a knowledge tree with a prefix-aware replacement policy to minimize  redundant computation 
and a dynamic speculative pipelining mechanism to overlap the knowledge retrieval and LLM inference 
in the RAG workflow.
We evaluate \sysname with a variety of models and workloads. 
The experimental results show that \sysname outperforms the state-of-the-art solution, vLLM integrated with Faiss, 
by up to 4$\times$ on TTFT and 2.1$\times$ on throughput.
j

\bibliographystyle{ACM-Reference-Format}
\bibliography{paper}

\end{document}